\documentclass[onecolumn,draftclsnofoot,12pt]{IEEEtran}
%\documentclass[onecolumn,draftclsnofoot,12pt]{IEEEtran}
%\documentclass[conference]{IEEEtran}
%\usepackage{setspace}
%\onehalfspace
\usepackage{tipa}

\usepackage{setspace}
\doublespace
\usepackage{cite}
\usepackage{bbm}

\usepackage{graphicx,bm,hyperref}
\usepackage{multicol}
\usepackage{amsmath}
\usepackage{amsfonts}
\usepackage{amssymb}
\usepackage{graphicx}
\makeatletter

\newcommand{\Rmnum}[1]{\expandafter\@slowromancap\romannumeral #1@}
\makeatother

\hyphenation{op-tical net-works semi-conduc-tor}

\newtheorem{theorem}{Theorem}
\newtheorem{Lemma}[theorem]{Lemma}

\newenvironment{proof}[1][Proof]{\begin{trivlist}
\item[\hskip \labelsep {\bfseries #1}]}{\end{trivlist}}

\newcommand{\qed}{\nobreak \ifvmode \relax \else
      \ifdim\lastskip<1.5em \hskip-\lastskip
      \hskip1.5em plus0em minus0.5em \fi \nobreak
      \vrule height0.75em width0.5em depth0.25em\fi}

\begin{document}

\title{Design of Binary Network Codes for Multi-user Multi-way Relay Networks}

\author{\IEEEauthorblockN {Ang Yang, Zesong Fei, Chengwen Xing, Ming Xiao, Jinhong Yuan, and Jingming
Kuang}}

%\author{\IEEEauthorblockN {Ang Yang, Zesong Fei, Chengwen Xing, Ming Xiao and Jingming Kuang}
%\IEEEauthorblockA{\\School of Information Science and technology\\
%Beijing Institute of Technology\\
%Beijing, China\\
%Email: \{cool\_yang, feizesong, chengwenxing\}@bit.edu.cn}}
%}

\maketitle

\begin{abstract}
%\boldmath
We study multi-user multi-way relay networks where $N$ user nodes exchange their information through a
single relay node. We use network coding in the relay to increase the throughput. Due to the
limitation of complexity, we only consider the binary multi-user network coding (BMNC) in the relay.
We study BMNC matrix (in GF(2)) and propose several design criteria on the BMNC matrix to improve the
symbol error probability (SEP) performance. Closed-form expressions of the SEP of the system are
provided. Moreover, an upper bound of the SEP is also proposed to provide further insights on system
performance. Then BMNC matrices are designed to minimize the error probabilities.
\end{abstract}

\begin{keywords}
%\boldmath
$N$-way relay, binary network coding, symbol error probability.
\end{keywords}

\let\oldthefootnote\thefootnote
\renewcommand\thefootnote{}
\footnote{Ang Yang, Zesong Fei, Chengwen Xing and Jingming Kuang are with School of Information and
Electronics, Beijing Institute of Technology, Beijing, China (e-mail: \{cool\_yang, feizesong,
chengwenxing, jmkuang\}@bit.edu.cn).
\par Ming Xiao is with the ACCESS Linnaeus Center, Royal Institute of Technology, Stockholm, Sweden (e-mail:
ming.xiao@ee.kth.se).
\par Jinhong Yuan is with the School of Electrical Engineering and
Telecommunications, University of New South Wales, Australia (e-mail: j.yuan@unsw.edu.au).}
\let\thefootnote\oldthefootnote

\IEEEpeerreviewmaketitle

\section{Introduction}
Network coding (NC) is considered as a potentially powerful tool for efficient information
transmission in wireless networks, where data flows coming from multiple sources or to different sinks
are combined to increase throughput, reduce delay, or enhance robustness
\cite{MUNC:NC_flow,MUNC:NC_flow_2,Naka_df:Ning_Cai_Secure_NC}. Consider a two-way wireless system
where two source nodes communicate with each other through the aid of one relay node
\cite{Naka_df:TaoCui_Dif_modula_NC_two-way,Naka_df:TaoCui_memoryless_two-way,MUNC:feifei_gao_two_way_1,MUNC:y_jing_two_way,1687-1499-2012-66}.
With network coding, each of the two transceivers employs one time slot to transmit a packet to the
relay in a conventional time-division multiple access (TDMA) scheme. Next, the relay takes the
exclusive-or of these two packets and broadcasts the result during the third time slot. Armed with the
packet it sent to the relay, each of the transceivers can then recover the data originating at the
other relay, with the network having only used three slots rather than the traditional four. In what
follows, we shall extend NC to the multi-user, multi-hop, multi-relay and multi-radio wireless ad hoc
networks, which are introduced in
\cite{Naka_df:Peng_Liu_AF_Limited,Naka_df:DSTC_laneman,Jing:Distributed_STC_in_WSN,rvq:david,rvq:Eric}.

As previous related work, the sink bit error probability (BEP) for the coded network with memoryless
and independent channels is investigated in \cite{MUNC:Ming_Xiao_1}. The alphabet size of the code is
GF$(2^m)$. In \cite{MUNC:Ming_Xiao_2}, finite-field network coding (FFNC) is designed for
multiple-user multiple-relay (MUMR) wireless networks with quasi-static fading channels. For high rate
regions, FFNC has significantly better performance than superposition coding. In
\cite{MUNC:Multiuser_Two-Way_Relaying_Detection_and_Interference_Management}, using code division
multiple access (CDMA) of an interference limited system, a jointly demodulate-and-XOR forward
(JD-XOR-F) relaying scheme is proposed, where all users transmit to the relay simultaneously followed
by the relay broadcasting an estimate of the XORed symbol for each user pair. The problem of joint
resource allocation for OFDMA assisted two-way relay system is studied in
\cite{MUNC:feifei_gao_two_way_multi_user} and the objective function is to maximize the sum-rate
through joint subcarrier allocation, sub-carrier pairing, and power allocation, under the individual
power constraints at each transmitting node. Several beamforming schemes are proposed in
\cite{MUNC:Analogue_Network_Coding_for_Multi-Pair} for the scenario where multiple pairs of users
exchange information within pair, with the help of a dedicated multi-antenna relay. A cooperation
protocol based on complex-field wireless network coding is developed in a network with $N$ sources and
one destination \cite{MUNC:High-Throughput_Multi-Source_Cooperation_via_Complex-Field_Network_Coding}.
To deal with decoding errors at sources, selective- and adaptive-forwarding protocols are also
developed at no loss of diversity gain. For the multiple-access relay network, the capacity
approaching behavior of the joint network LDPC code is analyzed in \cite{MUNC:yingli1,MUNC:yingli2}.

Above literatures focus on the information exchange of multiple pairs of users with or without the
assistance of one relay node. For more general cases, in a practical network, there are multiple
relays or multiple hops. In \cite{MUNC:Yuan_Xiao_NC_1}, new approaches to LDPC code design for a
multi-source single-relay FDMA system are explored, under the assumption of uniform phase-fading
Gaussian channels. In \cite{MUNC:Yuan_Xiao_NC_2}, a binary field NC design over a multiple-source
multiple-relay wireless network over slow-fading channels is studied. In
\cite{MUNC:A_Novel_Collaboration_Scheme_for_Multi-Channel/Interface_NC}, a novel scheme of
multi-channel/interface network coding is proposed, which is based on the combination of a new concept
of coded-overhearing and coding-aware channel assignment. In \cite{MUNC:Shi_jin_1}, the power
allocation policies are investigated across the relays for automatic gain control (AGC)-based
amplify-and-forward (AF) distributed space-time code (DSTC) systems in the two-way relay networks. In
\cite{MUNC:Cross_Layer_Opti_multihop_Parwise_Intersession}, with a new flow-based characterization of
pairwise intersession network coding, an optimal joint coding, scheduling, and rate-control scheme can
be devised and implemented using only the binary XOR operation. In \cite{MUNC:Ray_Liu_STNC}, a novel
concept of wireless network cocast (WNC) \cite{MUNC:Ray_Liu_linear_NC} is considered and its
associated space-time network codes (STNCs) are proposed to achieve the foretold objectives. However,
CDMA-like, FDMA-like and TDMA-like techniques are proposed in \cite{MUNC:Ray_Liu_STNC},
\cite{MUNC:Ray_Liu_linear_NC}, where each symbol is assigned a complex-valued signature waveform, the
dedicated carrier and the symbol duration. In \cite{MUNC:coding}, several interesting properties of
network coding matrices are discussed in a network where $N$ users have independent information to
send to a common base station.

In \cite{MUNC:Xi_Zhang}, it has been shown that for the $N$-way single-channel relay network, it takes
at least $(2N - 1)$ time slots for the linear NC scheme without opportunistic listening to perform a
round of the $N$-way relay, where there are $N$, with $N \ge 2$, end nodes exchanging their
information with the assistant of one $N$-way relay with single antenna. However, \cite{MUNC:Xi_Zhang}
focus on a general linear programming framework for solving the throughput optimization problems and a
joint link scheduling, channel assignment, and routing algorithm for the wireless NC schemes to
closely approximate the optimal solutions. The detailed linear NC for $N$-way single-channel relay
networks, such as how $N$ information packets are encoded into $N-1$ pronumerals, is not investigated
in \cite{MUNC:Xi_Zhang}.

In this paper, we take a step further to investigate the efficient
linear NC for $N$-way single-channel relay network, which is also
discussed in \cite{MUNC:Xi_Zhang}. As shown in
\cite{MUNC:Yuan_Xiao_NC_2}, in the case that the network size (i.e.,
the number of the sources and the number of the relays) and the
frame length (i.e., the number of symbols or Galois filed elements
in a frame) are large, we need to choose a large size of the Galois
fields. Therefore, the encoding complexity of the GF(q) codes will
significantly increase. Since binary network coding is of low
complexity, binary multi-user network coding (BMNC) is considered
here to increase throughput. Several design criteria that the BMNC
matrix should follow to increase the system performance are
provided. Moreover, the effects of the noise and the BMNC matrix are
studied, based on which the symbol error probability (SER) of the
system is provided. To improve the system performance further, BMNC
matrices are designed for arbitrary number of users, which minimize
the bound of SEP.

%In this paper, we take a step further to investigate the efficient linear NC for $N$-way
%single-channel relay network, which is also discussed in \cite{MUNC:Xi_Zhang}. Since binary network
%coding is of low complexity, binary multi-user network coding (BMNC) is considered here to increase
%throughput. Several design criteria that the BMNC matrix should follow to increase the system
%performance are provided. Moreover, the effects of the noise and the BMNC matrix are studied, based on
%which the symbol error probability (SER) of the system is provided. To improve the system performance
%further, BMNC matrices are designed for arbitrary number of users, which minimize the bound of SEP.

The paper is organized as follows. In Section II, the system model is introduced. In Section III, BMNC
decoding process and a design criterion on the BMNC matrix are presented. Performance analysis is
shown in Section IV, which includes BMNC matrix analysis, closed-form expressions of the SEP and
throughput of the system, the tight upper bound of SEP of the system and the designed BMNC encoding
matrix. In Section V, the optimality of the matrix given in Section IV-D is discussed. Simulation
results are presented in Section VI and the conclusions are given in Section VII.

Some notations are listed as follows. Symbol $\left( {\bf{A}} \right)_{i,j}$ presents the
$(i,j)^{\rm{th}}$ element of matrix ${\bf{A}}$. Symbols $\oplus$, $\sum {^ \oplus }  $ denote the
addition and the summation in GF(2), respectively. Symbol $\circ$ presents element-wise product and $
\left( {{\bf{A}} \circ {\bf{B}}} \right)_{i,j}  = \left( {\bf{A}} \right)_{i,j} \left( {\bf{B}}
\right)_{i,j} $. Symbol $ \prod\limits_{}^{} {^ \circ  } $ presents the element-wise product of
multiple matrices or vectors.

\section{System Model}
%In this section, we introduce the multi-user network coding (MUNC)
%scheme, which is able to efficiently make the multi-node information
%exchanges be completed in fewer time slots. As shown in
%Fig.\ref{System model} ,we consider the condition that there are $N$
%users in the network which intend to exchange their information
%using a single relay $R$ at the same time, just like video
%conference.
%\begin{figure}[!t]
%\centering
%\includegraphics[width=3in]{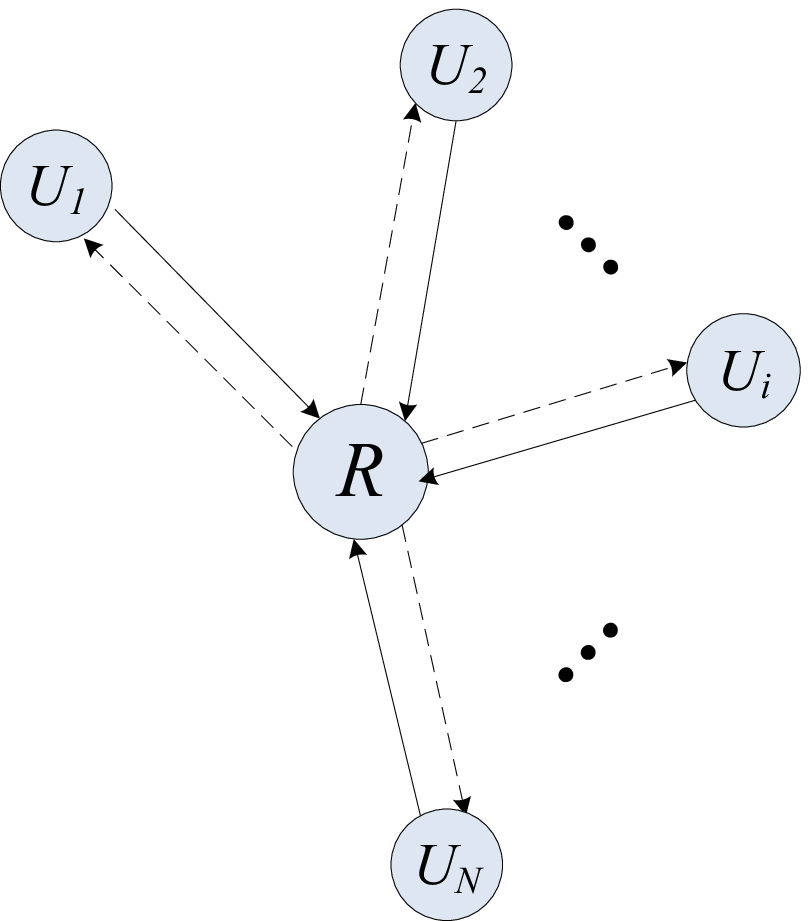}
%\caption{System model.} \label{System model}
%\end{figure}
Consider a wireless network with $N$ user nodes $U_i$, $i=1,\ldots, N$, and one relay node $R$, as
shown in Fig. 1. Each node has only one antenna, which can be used for both transmission and
reception. For practical services such as video conference in which each user may want to have a
discussion with the other users, the $N$ users need the information of other users and they exchange
information with the assistance of the relay $R$. Without loss of generality, in one time slot, the
exchanged information bit of $U_i$ can be denoted by $x_i$, $1\leq{}i \leq{}N$. Whereas, in practice,
the user nodes and the relay will transmit the information in packets that contains a large number of
symbols. The user nodes will collect all the transmitted packets and then jointly detect them. We
assume that the direct links between the users are not available. All communications must be through
the relay.

Take $U_i$ for example, it needs the information from the other $N-1$ users, while the other $N-1$
users need the information of $U_i$. In the traditional scheme, considering the time division
transmission schemes, the traditional scheme needs $2N$ time slots to finish the information exchange,
where $N$ time slots are used for the relay to receive the $N$ information bits of the $N$ user nodes
and the other $N$ time slots are used for the user nodes to receive the $N$ information bits from the
relay.

In order to improve the system performance, we propose a BMNC scheme, in which only $2N-1$ time slots
are used. In this scheme, the transmission can be divided into two consecutive phases. 1) In the
source transmission phase, each user node sends its own information to the relay node, which takes $N$
time slots. The relay receives and then detects the $N$ information bits from the $N$ users. 2) In the
relay transmission phase, the relay linearly combines the detected information bits, and then
broadcasts the combined information bits to all the users. Since each user knows its own information,
only the information bits of other $N-1$ users are needed. Thus, at least $N-1$ information bits
should be broadcasted from the relay to all the users. Finally, the BMNC scheme takes $2N-1$ time
slots to achieve the information exchange.

In the source transmission phase, the received symbols at the relay node are
\begin{equation}
\label{the matrix of the coding in relay 1}
\begin{split}
{\bf{y}}_0 = {{\bf{H}}_0}\textltailm({\bf{x}}) + {{\bf{N}}_0} ,
\end{split}
\end{equation}where ${\bf{y}}_0 = {\left[ {\begin{array}{*{20}{c}}
{{y_{0,1}}}&{{y_{0,2}}}& \ldots &{{y_{0,N}}} \end{array}} \right]^T}$ denotes the received signals at
the relay, \({{\bf{H}}_0} = diag \{ \\ {{h _{0,1}} , {h _{0,2}}, \ldots , {h _{0,N}}} \} \) denotes
the fading coefficients, ${\bf{x}} = {\left[ {\begin{array}{*{20}{c}} {{x_1}}&{{x_2}}& \ldots &{{x_N}}
\end{array}} \right]^T}$ denotes the bits of the users, $\textltailm(.)$ denotes the modulation
transformation and $\textltailm({\bf{x}})$ denotes the transmitted symbols of the users, and
${\bf{N}}_0 = {\left[ {\begin{array}{*{20}{c}} {{n_{0,1}}}&{{n_{0,2}}}& \ldots &{{n_{0,N}}}
\end{array}} \right]^T}$ denotes the additive white Gaussian noise (AWGN) with zero mean.

Then the relay detects the received information and obtains an estimation of the source bits
${\bf{\tilde x}} = {\left[ {\begin{array}{*{20}{c}} {{\tilde x_1}}&{{\tilde x_2}}& \ldots &{{\tilde
x_N}}
\end{array}} \right]^T}$. Then linearly network coding is proposed to combine $N$
information bits into $N-1$ bits, which can be shown as follows
\begin{equation}
\label{the matrix of the coding in relay 2}
\begin{split}
\left( {{\bf{F\tilde x}}} \right)\bmod \left( 2 \right)= {\bf{r}},
\end{split}
\end{equation}where \( {\bf{F}} \) is
the network encoding matrix in GF(2), vector \( {\bf{r}} = \left[ {\begin{array}{*{20}c} {r_1 } & {r_2
} &{\cdots}  & {r_{N - 1} }
\end{array}} \right]^T   \) denotes the $N-1$ information bits that
the relay will broadcast. Note that $r_i$ is the information bit to be transmitted in time slot $i$,
$i \in [1,N-1]$. The network encoding matrix \( {\bf{F}} \) can be described as
\begin{equation}
\label{the matrix of H}
\begin{split}
 {\bf{F}} & = \left[ {\begin{array}{*{20}c}
   {f_{1 ,1 } } &  {f_{1 ,2 } } & \cdots  & {f_{1 ,N } }  \\
   {f_{2 ,1 } } &  {f_{2 ,2 } } & \cdots  & {f_{2 ,N } }  \\
    \vdots  &            \vdots         & \ddots  &  \vdots   \\
   {f_{N - 1 ,1 } } & {f_{N - 1 ,2 } } & \cdots  & {f_{N - 1 ,N } }  \\
\end{array}} \right]_{(N-1) \times N} \\
  & =  \left[{\begin{array}{*{20}c}
   {{\bf{f}}_1 }  & {{\bf{f}}_2 } & \cdots  & {{\bf{f}}_N }  \\
\end{array}} \right]_{(N-1) \times N} ,
\end{split}
\end{equation}where \(f_{j ,i } \) is one element of the network encoding
matrix in GF(2) for \( j \in \left[ {1,N - 1} \right], i \in \left[ {1,N} \right] \), which is related
to $U_i$ and the $j$th symbol that the relay sends. Vector \( {\bf{f}}_i  = \left[
{\begin{array}{*{20}c}
   {f_{1 ,i } }  & {f_{2 ,i } }  & \cdots  & {f_{{N - 1} ,i } }  \\
\end{array}} \right]^T
\) is the $i$th column vector of \( {\bf{F}} \), which denotes the relationship between $U_i$ and the
$N-1$ symbols that the relay sends.

The symbol GF(2) is referred to the Galois filed of two elements \cite{lidl1997finite}. In our work,
it consists of 0 and 1. Over GF(2), many well-known but important properties of traditional number
systems, such as real number, rational number etc., are retained: addition has an identity element and
an inverse for every element; multiplication has an identity element ``1'' and an inverse for every
element but ``0''; addition and multiplication are commutative and associative; multiplication is
distributive over addition \cite{lidl1997finite}.

In the relay transmission phase, the information that $U_i$ receives is
\begin{equation}
\label{the matrix of the coding in relay 3}
\begin{split}
{\bf{y}}_i = {{\bf{H }}_i}\textltailm({\bf{r}})+ {{\bf{N}}_i} ,
\end{split}
\end{equation}where ${\bf{y}}_i = {\left[ {\begin{array}{*{20}{c}} {{y_{i,1}}}&{{y_{i,2}}}& \ldots
&{{y_{i,N-1}}}
\end{array}} \right]^T}$ denotes the signals that $U_i$ receives, ${{\bf{H }}_i} = diag\{
\\{{h _{i,1}}, {h _{i,2}},   \ldots ,{h _{i,N-1}}} \}$ denotes the fading coefficients, and
${\bf{N}}_i = {\left[ {\begin{array}{*{20}{c}} {{n_{i,1}}}&{{n_{i,2}}}& \ldots &{{n_{i,N-1}}}
\end{array}} \right]^T}$ denotes the AWGN with zero mean. Then the relay detects
the received information ${\bf{y}}_i $ and obtains ${\bf{\tilde r}}_i = {\left[
{\begin{array}{*{20}{c}} {{\tilde r_{i,1}}}&{{\tilde r_{i,2}}}& \ldots &{{\tilde r_{i,N-1}}}
\end{array}} \right]^T}$. Finally $U_i$ decodes the information of other users through ${\bf{ \tilde
r}}_i, {\bf{F}}, x_i$.

\section{BMNC Decoding Process}
As discussed above, the relay needs to broadcast at least $N-1$ coded bits. However, arbitrary
encoding may cause some users can not decode the source information bits even though there is no noise
in the system. Thus, the network coding process should be designed carefully.

Clearly, $U_i$ only knows its own information $x_i$, the $N-1$ bits ${\bf{ \tilde r}}_i$ that it
detects from the received information and the network coding matrix $\bf{F}$. Then we investigate the
relationship between $x_i$, ${\bf{ \tilde r}}_i$ and $\bf{F}$. From (\ref{the matrix of the coding in
relay 2}), we have
\begin{equation}
\label{decode 1_1}
\begin{split}
 {\bf{r}}
  = \left[ {\begin{array}{*{20}c}
   {\sum\limits_{k = 1}^N {^ \oplus  f_{1 ,k } \tilde x_{k} } }  \\
   {\sum\limits_{k = 1}^N {^ \oplus  f_{2 ,k } \tilde x_{k} } }  \\
    \vdots   \\
   {\sum\limits_{k = 1}^N {^ \oplus  f_{N - 1 ,k } \tilde x_{k} } }  \\
\end{array}} \right]_{(N-1) \times 1}.
\end{split}
\end{equation}

Separating the information of $U_i$ and other users, the above equation can be rewritten as
\begin{equation}
\label{decode 1_1_2}
\begin{split}
  {\bf{r}}&= \left[ {\begin{array}{*{20}c}
   {\sum\limits_{k = 1,k \ne i}^N {^ \oplus  f_{1 ,k } \tilde x_{k} }  \oplus f_{1 ,i } \tilde  x_{i} }  \\
   {\sum\limits_{k = 1,k \ne i}^N {^ \oplus  f_{2 ,k } \tilde x_{k} }  \oplus f_{2 ,i } \tilde x_{i} }  \\
    \vdots   \\
   {\sum\limits_{k = 1,k \ne i}^N {^ \oplus  f_{N - 1 ,k } \tilde x_{k} }  \oplus f_{N - 1 ,i } \tilde x_{i} }  \\
\end{array}} \right]_{(N-1) \times 1} \\
  &=\left\{ {\left( {{{\bf{F}}_i}{{{\bf{\tilde x}}}_i}} \right)\bmod \left( 2 \right)} \right\} \oplus \left\{ {\left( {{{\bf{f}}_i}{{\tilde x}_i}} \right)\bmod \left( 2 \right)} \right\}
  ,
\end{split}
\end{equation}where \({\bf{F}}_i  = \left[ {\begin{array}{*{20}c}
{{\bf{f}}_1 } & {\cdots} & {{\bf{f}}_{i - 1} } & {{\bf{f}}_{i + 1} } & {\cdots} & {{\bf{f}}_N }
\end{array}} \right]\) is the network sub-encoding matrix of $U_i$,  \( {\bf{ \tilde x}}_i  = \left[
{\begin{array}{*{20}c} {\tilde x_{1} }  & \cdots  & {
\tilde x_{i - 1} } & { \tilde x_{i + 1} } &  \cdots & { \tilde x_{N} }  \\
\end{array}} \right]^T \).

We denote \( {\bf{ \hat x}}_i  = \left[ {\begin{array}{*{20}c} {\hat x_{i,1} }  & \cdots  & {
\hat x_{i,i - 1} } & {\ \hat x_{i,i + 1} } &  \cdots & { \hat x_{i,N} }  \\
\end{array}} \right]^T \) as the bits obtained by network decoding at $U_i$. In the BMNC decoding, based on (\ref{decode
1_1_2}), ${\bf{ \hat x}}_i$ can be obtained through
\begin{equation}
\label{decode 1_2}
\begin{split}
{{\bf{\tilde r}}}_i= \left\{ {\left( {{{\bf{F}}_i}{{{\bf{\hat x}}}_i}} \right)\bmod \left( 2 \right)}
\right\} \oplus \left\{ {\left( {{{\bf{f}}_i}{x_i}} \right)\bmod \left( 2 \right)} \right\}.
\end{split}
\end{equation}

Adding $ \left\{ {\left( {{{\bf{f}}_i}{x_i}} \right)\bmod \left( 2 \right)} \right\} $ on both sides
of the above equation, (\ref{decode 1_2}) can be rewritten as
\begin{equation}
\label{decode 1_3}
\begin{split}
\left( {{{\bf{F}}_i}{{{\bf{\hat x}}}_i}} \right)\bmod \left( 2 \right) = {{{\bf{\tilde r}}}_i} \oplus
{\rm{ }}\left\{ {\left( {{{\bf{f}}_i}{x_i}} \right)\bmod \left( 2 \right)} \right\}.
\end{split}
\end{equation}

If the matrix \({\bf{F}}_i\) is not full rank, the column vector $ \left\{ {\left( {{{\bf{f}}_i}{x_i}}
\right)\bmod \left( 2 \right)} \right\}$ does not have $N-1$ independent elements so that $U_i$ can
not obtain all the information bits of other $N-1$ users. Thus \({\bf{F}}_i\) should be full rank,
then the inverse matrix \({\bf{F}}_i^{ - 1}\) exists. Multiplying \({\bf{F}}_i^{ - 1}\) on the two
sides of (\ref{decode 1_3}), we have
\begin{equation}
\label{decode 1_4}
\begin{split}
{{{\bf{\hat x}}}_i} = \left\{ {{\bf{F}}_i^{ - 1}\left\{ {{{{\bf{\tilde r}}}_i} \oplus \left\{ {\left(
{{{\bf{f}}_i}{x_i}} \right)\bmod \left( 2 \right)} \right\}} \right\}} \right\}\bmod \left( 2 \right).
\end{split}
\end{equation}

%Moreover, to ensure that each user can obtain other $N-1$ users'
%information, the network sub-encoding matrix \({\bf{F}}_i\) should
%be full rank.

It can be seen that when \({\bf{F}}_i\) is full rank, $U_i$ can decode the information through
(\ref{decode 1_4}). If \({\bf{F}}_i\) is not full rank, $U_i$ can not obtain all the information bits
of other users. Thus, \({\bf{F}}_i\) should be full rank for $i \in \left[{1,N}\right]$ to ensure that
all the users can acquire the information of the other users.

Then we propose the following design criterion on the BMNC matrix to achieve the information exchange.
\begin{theorem}
For $U_i$, if \({\bf{F}}_i\) is full rank, then
\begin{equation}
\label{Lemma1}
\begin{split}
{\bf{f}}_i  = \sum\limits_{k = 1,k \ne i}^N { ^ \oplus {\bf{f}}_k }
\end{split}
\end{equation} is the necessary and sufficient
condition that \({\bf{F}}_j\) is full rank for $j \in \left[{1,N}\right]$.
\end{theorem}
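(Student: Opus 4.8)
The plan is to exploit the hypothesis that ${\bf F}_i$ is full rank. Over GF(2) this means the $N-1$ columns $\{{\bf f}_k : k\neq i\}$ form a basis of GF$(2)^{N-1}$, so ${\bf f}_i$ has a unique coordinate expansion ${\bf f}_i=\sum_{k\neq i}^{\oplus}c_k{\bf f}_k$ with each $c_k\in\{0,1\}$. The whole statement then reduces to a basis-exchange analysis: for $j\neq i$, I want to show ${\bf F}_j$ is full rank if and only if $c_j=1$, after which the claimed equivalence drops out by taking the conjunction over all $j$.

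First I would record the observation that, for any $j\neq i$, the columns of ${\bf F}_j$ are precisely $\{{\bf f}_i\}\cup\{{\bf f}_k:k\neq i,\ k\neq j\}$, i.e.\ the fixed basis $\{{\bf f}_k:k\neq i\}$ with ${\bf f}_j$ removed and ${\bf f}_i$ inserted. If $c_j=0$, then ${\bf f}_i$ already lies in the span of $\{{\bf f}_k:k\neq i,j\}$, so the $N-1$ columns of ${\bf F}_j$ are linearly dependent and ${\bf F}_j$ is not full rank. If $c_j=1$, then (using $-1=1$ in GF(2)) one can solve for ${\bf f}_j={\bf f}_i\oplus\sum_{k\neq i,j}^{\oplus}c_k{\bf f}_k$, so ${\bf f}_j$ belongs to the span of the columns of ${\bf F}_j$; hence that span contains ${\bf f}_i$ and every ${\bf f}_k$ with $k\neq i$, i.e.\ it contains the whole basis and equals GF$(2)^{N-1}$. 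Since ${\bf F}_j$ has exactly $N-1$ columns spanning an $(N-1)$-dimensional space, it is full rank.

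Finally I would assemble the pieces. The statement ``${\bf F}_j$ is full rank for all $j\in[1,N]$'' is equivalent to ``${\bf F}_i$ is full rank (true by hypothesis) and ${\bf F}_j$ is full rank for every $j\neq i$'', which by the previous paragraph is equivalent to ``$c_j=1$ for every $j\neq i$'', which by uniqueness of the coordinate expansion is exactly ${\bf f}_i=\sum_{k\neq i}^{\oplus}{\bf f}_k$. Reading this chain of equivalences in both directions gives necessity and sufficiency simultaneously.

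The only delicate point is the basis-exchange step: it is elementary linear algebra, but it needs to be written carefully — in particular the bookkeeping of exactly which columns constitute ${\bf F}_j$ when $j\neq i$, and the harmless but essential use of $-1=1$ over GF(2) to invert the relation and solve for ${\bf f}_j$. Beyond that, everything is an immediate consequence of the uniqueness of coordinates relative to the basis $\{{\bf f}_k:k\neq i\}$.
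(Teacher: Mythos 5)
Your proof is correct and takes essentially the same route as the paper's: both expand ${\bf f}_i$ uniquely in the basis formed by the columns of ${\bf F}_i$ and show that, for $j\neq i$, ${\bf F}_j$ is full rank exactly when the coefficient of ${\bf f}_j$ in that expansion equals $1$. The paper presents the sufficiency direction through explicit elementary column operations that transform ${\bf F}_j$ into ${\bf F}_i$, while you argue via span containment, but the underlying basis-exchange idea is identical.
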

\begin{proof}
%\begin{equation}
%\label{Lemma1_1}
%\begin{split}
%{\bf{F}} = \left[ {\begin{array}{*{20}c}
%   {{\bf{f}}_1 }  & {\cdots} & {{\bf{f}}_{i - 1} } & {{\bf{f}}_i } & {{\bf{f}}_{i - 1} } & {...} & {{\bf{f}}_N }  \\
%\end{array}} \right].
%\end{split}
%\end{equation}
See Appendix A.
\end{proof}

Using Theorem 1, the network coding matrix $\bf{F}$ can be easily designed through one full rank
matrix in GF(2). Moreover, in the following, Theorem 1 is used for performance analysis.

\section{Performance Analysis}
Above, network encoding and decoding protocols are investigated. From (\ref{decode 1_4}), it is
evident that different $\bf{F}$ results in different system performance. Thus, Theorem 1 is not
sufficient for further performance analysis and improvement. In this section, the BMNC matrix and the
error performance of the system will be analyzed.
\subsection{Network coding matrix analysis}
In what follows, we shall study how the network coding matrix affects error rates at $U_i$. First, we
have the following result:
\begin{theorem} Since $U_i$ needs to obtain other $N-1$ users'
information bits, the error vectors that user $i$ receives are
\begin{equation}
\label{user i error}
\begin{split}
{{{\bf{\hat x}}}_{e,i}} &=  {{{\bf{x}}}_{i}}\oplus {{{\bf{\hat x}}}_{i}} \\
 &=\left( {{{\bf{x}}_i} \oplus {{{\bf{\tilde x}}}_i}} \right) \oplus \left( {\left( {{x_i} \oplus {{\tilde x}_i}} \right){{\bf{1}}_{N - 1 \times 1}}} \right) \oplus \left\{ {\left( {{\bf{F}}_i^{ - 1}\left( {{\bf{r}} \oplus {{{\bf{\tilde r}}}_i}} \right)} \right)\bmod \left( 2 \right)} \right\},
\end{split}
\end{equation}where \( {\bf{x}}_i  = \left[
{\begin{array}{*{20}c}
   {x_1 }  &  \cdots  & {x_{i - 1} } & {x_{i + 1} } &  \cdots  & {x_N }  \\
\end{array}} \right]^T
\) refers to the information that $U_i$ wishes to obtain, \( {{\bf{1}}_{N - 1 \times 1} } \) is a
column vector with $N-1$ elements which are all $1$.
\end{theorem}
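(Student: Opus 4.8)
The plan is to derive the error vector expression $\eqref{user i error}$ directly by substituting the BMNC decoding equation $\eqref{decode 1_4}$ into the definition ${{\bf{\hat x}}}_{e,i} = {\bf{x}}_i \oplus {\bf{\hat x}}_i$ and then carefully tracking how the various error sources — the relay's detection error $\tilde{\bf x}\oplus{\bf x}$ in the source phase, the detection error $\tilde r_{i}$ versus $r$ in the relay phase, and $U_i$'s knowledge of only its own bit $x_i$ — propagate through the linear map ${\bf F}_i^{-1}$. All arithmetic is over GF(2), so $a\oplus a = 0$ and addition is its own inverse; this is what makes the bookkeeping tractable.

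**First I would** write $\eqref{decode 1_4}$ as ${\bf{\hat x}}_i = \left\{ {\bf F}_i^{-1}\left\{ \tilde{\bf r}_i \oplus \left\{({\bf f}_i x_i)\bmod 2\right\}\right\}\right\}\bmod 2$, and in parallel express the ``true'' quantity $\tilde{\bf x}_i$ (the relay's estimates of the other users' bits) in terms of ${\bf r}$ using $\eqref{decode 1_1_2}$: multiplying that identity by ${\bf F}_i^{-1}$ gives $\tilde{\bf x}_i = \left\{{\bf F}_i^{-1}\left\{{\bf r} \oplus \left\{({\bf f}_i \tilde x_i)\bmod 2\right\}\right\}\right\}\bmod 2$, which uses that ${\bf F}_i$ is full rank (guaranteed by Theorem 1). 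Subtracting (equivalently, XOR-ing) these two expressions and using linearity of ${\bf F}_i^{-1}$ over GF(2), the ${\bf F}_i^{-1}$ terms combine into ${\bf F}_i^{-1}\left({\bf r}\oplus\tilde{\bf r}_i \oplus \left\{({\bf f}_i(x_i\oplus\tilde x_i))\bmod 2\right\}\right)$. The cross term ${\bf F}_i^{-1}{\bf f}_i$ is where I would invoke the structural identity $\eqref{Lemma1}$: since ${\bf f}_i = \sum_{k\ne i}^{\oplus}{\bf f}_k$, the vector ${\bf F}_i^{-1}{\bf f}_i$ equals ${\bf 1}_{N-1\times 1}$ (the all-ones column), because ${\bf F}_i$ has columns $\{{\bf f}_k\}_{k\ne i}$ and their sum is ${\bf f}_i$. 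Thus the term $\left\{({\bf F}_i^{-1}{\bf f}_i (x_i\oplus\tilde x_i))\bmod 2\right\}$ becomes exactly $(x_i\oplus\tilde x_i){\bf 1}_{N-1\times 1}$, the middle term of $\eqref{user i error}$.

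**Then I would** assemble the three pieces: $\tilde{\bf x}_i \oplus {\bf\hat x}_i = (x_i\oplus\tilde x_i){\bf 1}_{N-1\times 1} \oplus \left\{\left({\bf F}_i^{-1}({\bf r}\oplus\tilde{\bf r}_i)\right)\bmod 2\right\}$, and finally XOR both sides with $({\bf x}_i \oplus \tilde{\bf x}_i)$ to convert the left side from $\tilde{\bf x}_i\oplus{\bf\hat x}_i$ into ${\bf x}_i\oplus{\bf\hat x}_i = {\bf\hat x}_{e,i}$. This produces precisely $\eqref{user i error}$: the first term $({\bf x}_i\oplus\tilde{\bf x}_i)$ captures the relay's source-detection errors on the other users, the second term $(x_i\oplus\tilde x_i){\bf 1}_{N-1\times1}$ captures the relay's error on $U_i$'s own bit (which $U_i$ cannot know it was mis-detected), and the third term $\left\{({\bf F}_i^{-1}({\bf r}\oplus\tilde{\bf r}_i))\bmod 2\right\}$ captures $U_i$'s own relay-phase detection errors, spread across coordinates by ${\bf F}_i^{-1}$.

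**The main obstacle** I anticipate is purely notational rather than conceptual: keeping the $\bmod 2$ and $\oplus$ operations consistent when pulling ${\bf F}_i^{-1}$ through sums, and being careful that $\tilde x_i$ (relay's estimate of $U_i$'s bit) and $x_i$ (the true bit, known to $U_i$) play asymmetric roles — $U_i$ plugs in $x_i$ when decoding, but the relay encoded with $\tilde x_i$, so their discrepancy is an irreducible error contribution. The only genuine mathematical input is the identity ${\bf F}_i^{-1}{\bf f}_i = {\bf 1}_{N-1\times1}$, which follows immediately from $\eqref{Lemma1}$ and the definition of ${\bf F}_i$; everything else is linear-algebra bookkeeping over GF(2).
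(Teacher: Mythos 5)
Your proposal is correct and follows essentially the same route as the paper's Appendix B: both derivations reduce to GF(2) linearity of ${\bf F}_i^{-1}$, the invertibility of ${\bf F}_i$, and the key identity ${\bf f}_i=\sum\nolimits_{k\ne i}^{\oplus}{\bf f}_k$ from Theorem 1, which gives ${\bf F}_i^{-1}{\bf f}_i={\bf 1}_{N-1\times 1}$ and hence the middle term $\left(x_i\oplus\tilde x_i\right){\bf 1}_{N-1\times 1}$. The only difference is bookkeeping: the paper inserts ${\bf r}\oplus{\bf r}$ and substitutes ${\bf r}=({\bf F}\tilde{\bf x})\bmod 2$ into the decoding equation, while you XOR the decoding equation against the inverted encoding relation for $\tilde{\bf x}_i$; the two computations are equivalent.
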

\begin{proof}
See Appendix B.
\end{proof}

\subsection{Exact system performance}
In this subsection, closed-form expressions of the SEP and throughput of the system will be derived.

\noindent \textbf{The SEP of the system:}

First, we shall investigate the addition in GF(2) and give the following lemma which will be used for
our results later.
\begin{Lemma}
For addition of $Q$ numbers in GF(2), we have
\begin{equation}
\label{GF(2) add lemma}
\begin{split}
\sum\limits_{q = 1}^Q {^ \oplus  a_q }  = \sum\limits_{q = 1}^Q {{{\left( { - 2} \right)}^{q -
1}}\sum\limits_{1 \le {p_1} < {p_2} <  \cdots  < {p_q} \le Q}^{} {\prod\limits_{j = 1}^q {{a_{{p_j}}}}
} },
\end{split}
\end{equation}where $a_q  \in \left\{ {0,1} \right\}$.
\end{Lemma}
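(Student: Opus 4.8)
The plan is to prove the identity
\[
\sum\limits_{q = 1}^Q {^ \oplus  a_q }  = \sum\limits_{q = 1}^Q {{{\left( { - 2} \right)}^{q - 1}}\sum\limits_{1 \le {p_1} < {p_2} <  \cdots  < {p_q} \le Q}^{} {\prod\limits_{j = 1}^q {{a_{{p_j}}}}} }
\]
by induction on $Q$, the number of bits. For the base case $Q=1$ both sides equal $a_1$, and for $Q=2$ the right-hand side is $a_1 + a_2 - 2a_1 a_2$, which indeed equals $a_1 \oplus a_2$ for $a_1,a_2 \in \{0,1\}$ (check the four cases). This suggests the right recursion to exploit: over GF(2), $a \oplus b = a + b - 2ab$ as integers whenever $a,b\in\{0,1\}$, because $a\oplus b$ and $a+b$ agree except when $a=b=1$, where they differ by exactly $2$.

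For the inductive step I would set $S_Q = \sum_{q=1}^{Q}{}^{\oplus} a_q$ and use $S_{Q+1} = S_Q \oplus a_{Q+1} = S_Q + a_{Q+1} - 2 S_Q a_{Q+1}$; here the last equality needs that $S_Q \in \{0,1\}$, which holds since it is a sum in GF(2). Substituting the induction hypothesis for $S_Q$ and denoting by $e_q(a_1,\dots,a_Q) = \sum_{1\le p_1 < \cdots < p_q \le Q} \prod_{j=1}^q a_{p_j}$ the elementary-symmetric-type sums, the claim reduces to the purely combinatorial identity
\[
e_q(a_1,\dots,a_{Q+1}) = e_q(a_1,\dots,a_Q) + a_{Q+1}\, e_{q-1}(a_1,\dots,a_Q),
\]
which is just the standard splitting of the monomials of degree $q$ into those that omit the index $Q+1$ and those that include it. Bookkeeping the powers of $(-2)$ carefully — the monomials newly containing $a_{Q+1}$ pick up one extra factor of $a_{Q+1}$ and, from the $-2S_Q a_{Q+1}$ term, one extra factor of $-2$, bumping $(-2)^{q-1}$ to $(-2)^q$ — shows the coefficients line up exactly with the $Q+1$ version of the formula.

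The main obstacle is bookkeeping rather than conceptual: one must track three contributions to the coefficient of a given monomial on the right side of $S_{Q+1}$ (from $S_Q$, from $a_{Q+1}$ alone, and from $-2S_Q a_{Q+1}$) and verify they combine into the single stated coefficient $(-2)^{q-1}$ for each degree $q$, including the edge cases $q=1$ (where the $-2S_Q a_{Q+1}$ term does not contribute to degree-$1$ monomials) and $q=Q+1$ (the top-degree monomial $\prod_{j=1}^{Q+1} a_j$ arises only from the $-2S_Q a_{Q+1}$ term). A clean alternative, which I would mention as a remark, is a direct non-inductive argument: since each $a_q$ is idempotent ($a_q^2 = a_q$), the generating identity $\prod_{q=1}^Q (1 - 2 a_q) = \sum_{q=0}^{Q} (-2)^q e_q(a_1,\dots,a_Q)$ has left side equal to $(-1)^{(\text{number of }q\text{ with }a_q=1)}$, i.e. $1 - 2 S_Q$; rearranging gives the claim in one line. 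Either route completes the proof.
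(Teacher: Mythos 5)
Your proof is correct and follows essentially the same route as the paper's: induction on $Q$, using the two-bit identity $a \oplus b = a + b - 2ab$ to write $S_{Q+1} = S_Q + a_{Q+1} - 2 a_{Q+1} S_Q$ and then regrouping the $(-2)^{q-1}$-weighted elementary-symmetric sums. Your one-line alternative via $\prod_{q=1}^{Q}\left( 1 - 2a_q \right) = 1 - 2S_Q$ is a nice extra, but the main argument matches the paper's Appendix C proof.
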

\begin{proof}
See Appendix C.
\end{proof}

Since (\ref{user i error}) is not convenient for SEP analysis, (\ref{user i error}) should be
transformed and the third item of (\ref{user i error}) can be rewritten as
\begin{equation}
\label{user i error_add GF(2) 1}
\begin{split}
 \left\{ {{\bf{F}}_i^{ - 1}\left( {{\bf{r}} \oplus {{{\bf{\tilde r}}}_i}} \right)} \right\}\bmod \left( 2 \right) & = \left\{ {\left[ {\begin{array}{*{20}{c}}
{{{\bf{G}}_{i,1}}}&{{{\bf{G}}_{i,2}}}& \cdots &{{{\bf{G}}_{i,N - 1}}}
\end{array}} \right]\left( {{\bf{r}} \oplus {{{\bf{\tilde r}}}_i}} \right)} \right\}\bmod \left( 2 \right)  \\
  &= \sum\limits_{n = 1}^{N - 1} {^ \oplus  \left( {{\bf{G}}_{i,n}  \circ \left( {{\bf{r}} \oplus {{{\bf{\tilde r}}}_i}} \right) }
  \right)},
\end{split}
\end{equation}where vector \( {{\bf{G}}_{i,k} } \) is the $k$th column vector of
\( {{\bf{F}}_i ^{ - 1} } \).

Substituting (\ref{user i error_add GF(2) 1}) into (\ref{user i error}), we have
\begin{equation}
\label{user i error_add GF(2) 2}
\begin{split}
 {\bf{\hat x}}_{e,i}  &= \left( {{{\bf{x}}_i} \oplus {{{\bf{\tilde x}}}_i}} \right) \oplus \left( {\left( {{x_i} \oplus {{\tilde x}_i}} \right){\bf{1}}_{N - 1 \times 1}  } \right) \oplus \left( {\sum\limits_{n = 1}^{N - 1} {^ \oplus  \left( {{\bf{G}}_{i,n}  \circ \left( {{\bf{r}} \oplus {{{\bf{\tilde r}}}_i}} \right)} \right)} } \right) \\
  &= \sum\limits_{n = 1}^{N + 1} {^ \oplus  {\bf{a}}_{i,n} }  ,
\end{split}
\end{equation}where ${\bf{a}}_{i,1}  = \left( {{{\bf{x}}_i} \oplus {{{\bf{\tilde x}}}_i}} \right)$, ${\bf{a}}_{i,2} ={\left( {{x_i} \oplus {{\tilde x}_i}} \right){\bf{1}}_{N - 1 \times 1}  }
$, ${\bf{a}}_{i,n} = {{\bf{G}}_{i,n-2}  \circ \left( {{\bf{r}} \oplus {{{\bf{\tilde r}}}_i}} \right) }
$, for $3 \le n \le N+1$.

Using Lemma 3, (\ref{user i error_add GF(2) 2}) can be expressed as
\begin{equation}
\label{user i error_add GF(2) 3}
\begin{split}
{\bf{\hat x}}_{e,i}  = \sum\limits_{n = 1}^{N + 1} {{{\left( { - 2} \right)}^{n - 1}}\sum\limits_{1
\le {p_1} < {p_2} <  \cdots  < {p_n} \le N + 1}^{} {\prod\limits_{j = 1}^n {^ \circ
{{\bf{a}}_{i,{p_j}}}} } }.
\end{split}
\end{equation}

Based on (\ref{user i error_add GF(2) 3}), using Bayesian formula,
the error probability of user $i$ can be calculated as
\begin{equation}
\label{user i error_add GF(2) 4}
\begin{split}
 P_{e,i} \left( {\bf{F}} \right) &= E\left[ {\left| {{\bf{\hat x}}_{e,i} } \right|} \right] \\
  &= \sum\limits_{k = 1}^{N - 1} {E\left[ {\left( {{\bf{\hat x}}_{e,i} } \right)_k } \right]}  \\
  &= \sum\limits_{k = 1}^{N - 1} {\sum\limits_{n = 1}^{N + 1} {{{\left( { - 2} \right)}^{n - 1}}\sum\limits_{1 \le {p_1} < {p_2} <  \cdots  < {p_n} \le N + 1}^{} {\prod\limits_{j = 1}^n {E\left[ {{{\left( {{{\bf{a}}_{i,{p_j}}}} \right)}_k}} \right]} } } }
  ,
\end{split}
\end{equation}where
\begin{equation}
\label{user i error_add GF(2) 5}
\begin{split}
E\left[ {\left( {{\bf{a}}_{i,1} } \right)_k } \right] &= \left\{
{\begin{array}{ll}
   {{E\left[ { {{x_k} \oplus {{\tilde x}_k}} } \right] } ,} & {{k \le i - 1,}} \\
   {{E\left[ {{{x_{k+1}} \oplus {{\tilde x}_{k+1}}} } \right] ,}} & {{ i \le k \le N-1,} } \\
\end{array}} \right. \\
E\left[ {\left( {{\bf{a}}_{i,2}  } \right)_k } \right] &=E\left[{  {{x_i} \oplus {{\tilde x}_i}} }
\right] ,\\
E\left[ {\left( {{\bf{a}}_{i,l} } \right)_k } \right] &=E\left[ { {{r_k} \oplus {{\tilde r}_{i,k}}} }
\right]\left( {{\bf{F}}_i^{ - 1} } \right)_{k,l - 2}
 , \ {\rm{for
}} \ 3 \le l \le N+1 .
\end{split}
\end{equation}

For BPSK, the relationship between the SEP and the received SNR over Rayleigh fading channels is
\cite{rvq:digital_com_over_fading_channels}
\begin{equation}
\label{SER BPSK}
\begin{split}
P\left( {e_\Delta  } \right) = \frac{1}{2} - \frac{1}{2}\sqrt
{\frac{{\gamma _\Delta  }}{{1 + \gamma _\Delta  }}} ,
\end{split}
\end{equation}where $e_\Delta $ denotes the error and $ \gamma _\Delta$ denotes the average received
SNR. Thus, (\ref{user i error_add GF(2) 5}) can be rewritten as
\begin{equation}
\label{user i error_add GF(2) 6}
\begin{split}
E\left[ {\left( {{\bf{a}}_{i,1} } \right)_k } \right] &= \left\{
{\begin{array}{ll}
   {\frac{1}{2} - \frac{1}{2}\sqrt {\frac{{\gamma _{x_k } }}{{1 + \gamma
_{x_k } }}} ,} & {{k \le i - 1,}} \\
   {{\frac{1}{2} - \frac{1}{2}\sqrt {\frac{{\gamma _{x_{k+1} } }}{{1 + \gamma
_{x_{k+1} } }}},}} & {{ i \le k \le N-1,} }
\end{array}} \right. \\
E\left[ {\left( {{\bf{a}}_{i,2} } \right)_k } \right] &=
\frac{1}{2} -
\frac{1}{2}\sqrt {\frac{{\gamma _{x_i } }}{{1 + \gamma _{x_i } }}},\\
E\left[ {\left( {{\bf{a}}_{i,l} } \right)_k } \right] &= \left( {\frac{1}{2} - \frac{1}{2}\sqrt
{\frac{{\gamma _{i,r_k } }}{{1 + \gamma _{i,r_k } }}} } \right) \left( {{\bf{F}}_i^{ - 1} }
\right)_{k,l - 2}
 , \ {\rm{for
}} \ 3 \le l \le N+1,
\end{split}
\end{equation}where $\gamma _{x_k }$ is the average received SNR of $U_k$ at
the relay and $\gamma _{i,r_k }$ is the average received SNR at $U_i$ in time slot $k$ in the second
phase.

Thus the error probability of the system can be expressed as
\begin{equation}
\label{exact N user error probability 1}
\begin{split}
 P_e \left( {\bf{F}} \right) &= \frac{1}{N}\sum\limits_{i = 1}^N {P_{e,i} \left( {\bf{F}} \right)} ,
 % &= \frac{1}{N}\sum\limits_{i = 1}^N {\sum\limits_{k = 1}^{N - 1} {\sum\limits_{n = 1}^{N + 1} {{{\left( { - 2} \right)}^{n - 1}}\sum\limits_{1 \le {p_1} < {p_2} <  \cdots  < {p_n} \le N + 1}^{} {\prod\limits_{j = 1}^n {E\left[ {{{\left( {{{\bf{a}}_{j,{p_j}}}} \right)}_k}} \right]} } } } }
\end{split}
\end{equation}where ${P_{e,i} \left( {\bf{F}} \right)}$ is given in (\ref{user i error_add GF(2) 4}).

\noindent \textbf{The throughput of the system:}

Here we define the throughput as the symbols received correctly at all the users per time slot. Then
the throughput of the system with NC is given by
\begin{equation}
\label{throughput}
\begin{split}
 \mathbb{T}_{NC} = \frac{{N\left( {N - 1} \right)\left( {1 - {P_e}\left( {\bf{F}} \right)} \right)}}{{2N -
 1}},
 \end{split}
\end{equation}where $N$ denotes the number of the users and $P_e \left( {\bf{F}} \right) $ is given in
(\ref{exact N user error probability 1}).

As the extension of two-way relay network \cite{MUNC:two-way_liyouhui}, for the system without NC, we
first derive the SEP of the system. For one user, the error probability of other users receiving $x_i$
can be expressed as
\begin{align}\label{A 10 1}
{P_{e,{x_i}}} = (N-1){P_e}\left( {{x_i}} \right) + \left( {1 - {P_e}\left( {{x_i}} \right)}
\right)\sum\limits_{j = 1}^{N - 1} {{P_e}\left( {{r_{j,i}}} \right)}
\end{align}where ${P_e}\left( {{x_i}} \right)$ denotes the error probability of the relay detecting
$x_i$ and ${{P_e}\left( {{r_{j,i}}} \right)}$ denotes the error probability of $U_j$ detecting $r_i$.
In the system without NC, it is evident that $r_i$ is $x_i$.

Then the SEP of the system without NC can be expressed as
\begin{align}\label{Sep no nc}
{P_e}' &= \sum\limits_{i = 1}^N {\left\{ {(N-1){P_e}\left( {{x_i}} \right) + \left( {1 - {P_e}\left( {{x_i}} \right)} \right)\sum\limits_{j = 1}^{N - 1} {{P_e}\left( {{r_{j,i}}} \right)} } \right\}} \notag\\
 &= \sum\limits_{i = 1}^N {\left\{ {\frac{1}{2}(N-1)\left( {1 - \sqrt {\frac{{{\gamma _{{x_i}}}}}{{1 + {\gamma _{{x_i}}}}}} } \right) + \frac{1}{2}\left( {1 - \frac{1}{2}\left( {1 - \sqrt {\frac{{{\gamma _{{x_i}}}}}{{1 + {\gamma _{{x_i}}}}}} } \right)} \right)\sum\limits_{j = 1}^{N - 1} {\left( {1 - \sqrt {\frac{{{\gamma _{j,{r_i}}}}}{{1 + {\gamma _{j,{r_i}}}}}} } \right)} } \right\}}
\end{align}where Eq. (19) is used in the last step.

The throughput of the system without NC can be easily obtained as
\begin{align}\label{throughput_no nc}
\mathbb{T}_{No \ NC} =\frac{{\left( {N - 1} \right)\left( {1 - {P_e}^\prime } \right)}}{{2}},
\end{align}where ${P_e}^\prime$ is given in (\ref{Sep no nc}).

In the high SNR region, the throughput of the scheme with NC and without NC can be expressed as
\begin{align}\label{B 1}
\mathbb{T}{_{NC}^\infty} &= \mathop {\lim }\limits_{SNR \to \infty } \mathbb{T}{_{NC}}\notag\\
 &= \mathop {\lim }\limits_{Pe\left( {\bf{F}} \right) \to 0} \mathbb{T}{_{NC}}\notag\\
 &= \frac{{N\left( {N - 1} \right)}}{{2N - 1}}.
\end{align}
\begin{align}\label{B 2}
\mathbb{T}{_{No \ \! NC}^\infty} &= \mathop {\lim }\limits_{SNR \to \infty } \mathbb{T}{_{No\ \! NC}}\notag\\
 &= \frac{{ {N - 1} }}{{2}}.
\end{align}where $\mathbb{T}{_{NC}^\infty}$ and $\mathbb{T}{_{No \ \! NC}^\infty}$ denote
the throughput of the system with and without NC in the high SNR region, respectively.

Based on (\ref{B 1}) (\ref{B 2}), the absolute value of the throughput improvement equals to
\begin{align}\label{B 3}
\mathbb{T}{_\Delta ^\infty} &= \mathbb{T}{_{NC}^\infty} - \mathbb{T}{_{No \ \! NC}^\infty}\notag\\
 &= \frac{{N\left( {N - 1} \right)}}{{2N - 1}} - \frac{{ {N - 1}}}{{2}}\notag\\
 &= \frac{1}{4}\left( {1 - \frac{1}{{2N - 1}}} \right).
\end{align} It can be seen that $\mathbb{T}{_\Delta ^\infty}$ increases with $N$,
which indicates that increasing the number of the users brings an improved performance of the absolute
value of the throughput.

%Now we turn to the relative value of the throughput improvement and have
%\begin{align}\label{B 4}
%{\rho' _\Delta } &= \frac{{Throughpu{t'_\Delta }}}{{Throughput{'_{No{\rm{ }}NC}}}}\notag\\
% &= \frac{1}{{2N - 1}}
%\end{align}where ${\rho _\Delta }$ denotes the relative value of the throughput improvement. It can be seen
%that ${\rho _\Delta }$ decreases when $N$ increases, which indicates that increasing the number of the
%users leads to a reduction on the relative value of the throughput improvement.

%For the system without NC, the throughput of the system can be easily obtained as
%\begin{equation}
%\label{throughput_no nc}
%\begin{split}
% Throughput_{No \ NC} =\frac{{N\left( {N - 1} \right)\left( {1 - {P_e}^\prime } \right)}}{{2N}},
%\end{split}
%\end{equation}where ${P_e}^\prime$ denotes the SEP of the system without NC, which is expressed as
%\begin{equation}
%\label{Sep no nc}
%\begin{split}
%{P_e}^\prime  = \sum\limits_{i = 1}^N {\left\{ {1 - \left[ {1 - \left( {\frac{1}{2} - \frac{1}{2}\sqrt
%{\frac{{{\gamma _{{x_i}}}}}{{1 + {\gamma _{{x_i}}}}}} } \right)} \right]\left[ {1 - \left(
%{\frac{1}{2} - \frac{1}{2}\sqrt {\frac{{{\gamma _{i,{r_i}}}}}{{1 + {\gamma _{i,{r_i}}}}}} } \right)}
%\right]} \right\}}.
%\end{split}
%\end{equation}

\subsection{System performance bound}
Above, the exact closed-form expressions of SEP and throughput of the system have been derived.
However, it can be seen that the expressions are very complex and provide few insights. In this
subsection, the tight upper bound of SEP of the system will be derived to show useful insights.
\begin{Lemma}
For addition and multiplication in GF(2), we have
\begin{equation}
\label{GF(2) add}
\begin{split}
a \oplus b \le a+b ,
\end{split}
\end{equation}
\begin{equation}
\label{GF(2) multiply}
\begin{split}
\left( {{\bf{AB}}} \right)\bmod \left( 2 \right) \le {\bf{A}}  {\bf{B}} ,
\end{split}
\end{equation}where ${\bf{A}}$ is a $\left({L  \times M}\right)$ matrix and ${\bf{B}}$ is a column vector with $M$
elements. Symbol $a$ and $b$, all the elements in ${\bf{A}}$ and ${\bf{B}}$ are in GF(2).
\end{Lemma}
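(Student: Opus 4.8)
The plan is to establish the two inequalities in turn, reducing the matrix bound (\ref{GF(2) multiply}) to a scalar inequality obtained by iterating (\ref{GF(2) add}). The first inequality (\ref{GF(2) add}) I would simply verify by exhausting the four possibilities for $(a,b)\in\{0,1\}^2$: in the cases $(0,0),(0,1),(1,0)$ one has $a\oplus b=a+b$, while in the remaining case $(1,1)$ one has $a\oplus b=0<2=a+b$. Hence $a\oplus b\le a+b$ in every case, with equality unless $a=b=1$.

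For (\ref{GF(2) multiply}) I would argue entrywise. Fix a row index $l$. Since each $({\bf{A}})_{l,m}$ and each $({\bf{B}})_m$ lies in $\{0,1\}$, the product $({\bf{A}})_{l,m}({\bf{B}})_m$ is the same element of $\{0,1\}$ whether computed in GF(2) or in the integers; this is the one small point that deserves an explicit sentence. Consequently the $l$th entry of $({\bf{AB}})\bmod 2$ equals $\sum\limits_{m=1}^{M}{^\oplus ({\bf{A}})_{l,m}({\bf{B}})_m}$, whereas the $l$th entry of the ordinary product ${\bf{A}}{\bf{B}}$ equals $\sum\limits_{m=1}^{M}({\bf{A}})_{l,m}({\bf{B}})_m$, the two sums running over the same list of bits. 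It therefore suffices to prove that for any $a_1,\dots,a_Q\in\{0,1\}$,
\begin{equation*}
\sum\limits_{q=1}^{Q}{^\oplus a_q}\ \le\ \sum\limits_{q=1}^{Q}a_q .
\end{equation*}
This follows by induction on $Q$: the case $Q=1$ is immediate, and the inductive step applies (\ref{GF(2) add}) with $a=\sum_{q=1}^{Q-1}{^\oplus a_q}\in\{0,1\}$ and $b=a_Q$, combined with the inductive hypothesis $\sum_{q=1}^{Q-1}{^\oplus a_q}\le\sum_{q=1}^{Q-1}a_q$. (Equivalently one could invoke Lemma 3, or simply note that the GF(2) sum is the parity of the bit list, which is $0$ when the list is all zeros and at most its Hamming weight otherwise.) Taking $Q=M$ and $a_m=({\bf{A}})_{l,m}({\bf{B}})_m$ for each row $l$ then gives (\ref{GF(2) multiply}).

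The argument is entirely elementary and I do not anticipate any real obstacle. The only thing to be careful about is the meaning of (\ref{GF(2) multiply}): the inequality is to be read entrywise, between a vector with GF(2) entries on the left and a vector with nonnegative-integer entries on the right, so one must be explicit that both sides are sums of the very same bit-products and that the reduction to the scalar inequality is therefore legitimate.
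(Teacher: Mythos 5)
Your proposal is correct and follows essentially the same route as the paper's Appendix D: check $a\oplus b\le a+b$ directly, then compare the entries of $({\bf{AB}})\bmod 2$ and ${\bf{AB}}$ as GF(2) versus ordinary sums of the same bit products. The only difference is that you make explicit (via induction on the number of summands) the iterated use of $a\oplus b\le a+b$, which the paper applies implicitly in a single step.
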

\begin{proof}
See Appendix D.
\end{proof}

%\begin{align}
%a \oplus b &\le a+b , \\
%\left( {{\bf{AB}}} \right)\bmod \left( 2 \right) &\le {\bf{A}}  {\bf{B}} ,
%\end{align}

%\begin{equation}
%\label{GF(2) add}
%\begin{split}
%a \oplus b \le a+b ,
%\end{split}
%\end{equation}
%\begin{equation}
%\label{GF(2) multiply}
%\begin{split}
%\left( {{\bf{AB}}} \right)\bmod \left( 2 \right) \le {\bf{A}}  {\bf{B}} ,
%\end{split}
%\end{equation}

Using Lemma 4, (\ref{user i error}) can be upper bounded as
\begin{equation}
\label{user i error upper bound 1}
\begin{split}
{{{\bf{\hat x}}}_{e,i}} \le \left( {{{\bf{x}}_i} \oplus {{{\bf{\tilde x}}}_i}} \right) + \left(
{{{\bf{1}}_{N - 1 \times 1}}\left( {{x_i} \oplus {{\tilde x}_i}} \right)} \right) + \left(
{{\bf{F}}_i^{ - 1}\left( {{\bf{r}} \oplus {{{\bf{\tilde r}}}_i}} \right)} \right)\triangleq  {\bf{\hat
x}}_{e,i}^U,
\end{split}
\end{equation}where in the high SNR region, the probability that two errors occur simultaneously is much lower than the
probability that only one error occurs. Thus, the condition that more than two errors occur
simultaneously can be ignored. We note that this simplified bound is still very tight as we can see
from following simulations.

Based on (\ref{user i error upper bound 1}), the error probability of user $i$ can be calculated as
\begin{equation}
\label{user i error probability}
\begin{split}
 P_{e,i} \left( {\bf{F}} \right) &\le E\left[ {\left| {{\bf{\hat x}}_{e,i}^U } \right|} \right] \\
  &= \sum\limits_{k = 1,k \ne i}^N {E\left[ {{{x_k} \oplus {{\tilde x}_k}}  } \right]}  + \left( {N - 1} \right)E\left[ {  {{x_i} \oplus {{\tilde x}_i}}  } \right] + \sum\limits_{k = 1}^{N - 1} {E\left[ { {{r_k} \oplus {{\tilde r}_{i,k}}}  } \right]} \left| {{\bf{G}}_{i,k} } \right|
  \triangleq  P_{e,i}^U \left( {\bf{F}} \right),
\end{split}
\end{equation}

Using (\ref{user i error probability}), the SEP of the system can be upper bounded as
\begin{equation}
\label{SER system}
\begin{split}
P_e \left( {\bf{F}} \right) & \le \frac{1}{N}\sum\limits_{i = 1}^N {P_{e,i}^U \left( {\bf{F}} \right)}
  \triangleq  P_{e}^U \left( {\bf{F}} \right).
\end{split}
\end{equation}

Substituting (\ref{SER BPSK}) into (\ref{SER system}), the final upper bound of the SEP of the system
is
\begin{equation}
\label{SER system_1}
\begin{split}
 P_e^U \left( {\bf{F}} \right) &= \frac{1}{N}\sum\limits_{i = 1}^N {\left\{ {\sum\limits_{k = 1,k \ne i}^N {\left( {\frac{1}{2} - \frac{1}{2}\sqrt {\frac{{\gamma _{x_k } }}{{1 + \gamma _{x_k } }}} } \right)} } \right.}  \\
 & \quad \left. { + \left( {N - 1} \right)\left( {\frac{1}{2} - \frac{1}{2}\sqrt {\frac{{\gamma _{x_i } }}{{1 + \gamma _{x_i } }}} } \right) + \sum\limits_{k = 1}^{N - 1} {\left( {\frac{1}{2} - \frac{1}{2}\sqrt {\frac{{\gamma _{i,r_k } }}{{1 + \gamma _{i,r_k } }}} } \right)} \left| {{\bf{G}}_{i,k} } \right|} \right\}
 .
\end{split}
\end{equation}From the above expression (\ref{SER system_1}), it can be seen that for $U_i$, the received error probability
is made up of three parts: the first item ${\sum\limits_{k = 1,k \ne i}^N {P\left( {e_{x_k } }
\right)}}$ results from the transmissions of $N-1$ users except $U_i$ in the first phase, the second
item $\left( {N - 1} \right)P\left( {e_{x_i } } \right)$ results from the transmission of $U_i$ itself
in the first phase, the third item $\sum\limits_{k = 1}^{N - 1} {P\left( {e_{i,r_k } } \right)} \left|
{{\bf{G}}_{i,k} } \right|$ results from the transmissions of relay in the second phase. It is evident
that the impacts of the three items on the SEP of the system are on the same order of magnitude, since
their coefficients are all about $N-1$. It can also be seen that the third item has the largest impact
on the SEP of the system, since $\left| {{\bf{G}}_{i,k} } \right| \ge 1$. As only one relay is
employed to assist the users and the users do not cooperative with each other in the system, the
diversity order of the proposed scheme is $1$.

\subsection{Designed BMNC encoding matrix}
Above, the connection between the system error performance and the network coding matrix is provided.
Moreover, several design criteria of the network coding matrix, which ensure the successful
information exchange, are also investigated. It can be seen from (\ref{SER system_1}) that the network
coding matrix has a significant impact on the system performance. Thus the network coding matrix
should be designed carefully to further improve the system error performance.

In practical systems, the distance between the relay and the user varies for different users. Then the
average received SNRs at the relay for different users may not be the same. Moreover, for high order
modulations, the power that the relay uses also varies for different symbols.

It is assumed that the statistical channel state information is known at the relay. Without loss of
generality, we assume that the statistic channel conditions between the users and the relay have an
ascending order from $U_1$ to $U_N$, which means that the statistic distance between the $U_i$ and the
relay is larger than that between the $U_j$ and the relay when $i<j$. Moreover, we assume that the
power that the relay uses to broadcast the detected information has a descending order from time slot
$1$ to time slot $N-1$, which means that the power that the relay uses in time slot $i$ is higher than
that in time slot $j$ when $i<j$. In the following, we will show that the assumed order of the channel
gains and that of the power allocation formulate the designed network coding matrix in a more detail.

To simplify the network coding design process and save the memory, we propose a puncturing operation
based network coding matrix design scheme. In this scheme, the designed encoding matrix of $N$ users
is a matrix in which the upper left corner is the designed encoding matrix of $N-1$ users. Thus, the
relay only needs to memorize the designed network coding matrix of the maximum number users.

We design an encoding matrix for $N$ users as
\begin{equation}
\label{N user matrix}
\begin{split}
{\bf{F}}_{|N \ users}=\left[ {\begin{array}{*{20}c}
   1 & 1 & 0 &  \cdots  & 0  \\
   1 & 0 & 1 &  \cdots  & 0  \\
    \vdots  &  \vdots  &  \vdots  &  \ddots  &  \vdots   \\
   1 & 0 & 0 &  \cdots  & 1  \\
\end{array}} \right]_{(N-1) \times N},
\end{split}
\end{equation}the optimality of which will be discussed in Section V.

\section{The Optimality of the Matrix Given in (\ref{N user matrix})}
In this section, we will discuss the optimality of the matrix given in (\ref{N user matrix}). First,
we consider the situation that there are three users.
\begin{Lemma}
For $N = 3$, the designed network coding matrix $\bf{F}$, which minimizes the bound of SEP, can be
designed as follows
\begin{equation}
\label{3 user matrix}
\begin{split}
{\bf{F}}_{|3 \ users}=\left[ {\begin{array}{*{20}c}
   1 & 1 & 0  \\
   1 & 0 & 1  \\
\end{array}} \right].
\end{split}
\end{equation}
\end{Lemma}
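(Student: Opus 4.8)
The plan is to reduce the lemma to a finite comparison and then close it with a rearrangement argument. First I would pin down the admissible matrices. For $N=3$ the matrix $\mathbf{F}$ is $2\times 3$; the requirement that every $\mathbf{F}_i$ be full rank (established just before Theorem~1) together with criterion (\ref{Lemma1}) forces $\mathbf{f}_1\oplus\mathbf{f}_2\oplus\mathbf{f}_3=\mathbf{0}$ and forces $\{\mathbf{f}_1,\mathbf{f}_2\}$ to be a basis of $\mathrm{GF}(2)^2$, whence $\mathbf{f}_3=\mathbf{f}_1\oplus\mathbf{f}_2$ is the third nonzero vector. Hence the columns of any admissible $\mathbf{F}$ are precisely the three nonzero vectors of $\mathrm{GF}(2)^2$ in some order, so there are exactly $3!=6$ admissible matrices, one of which is (\ref{3 user matrix}); it suffices to show this one minimizes $P_e^U(\mathbf{F})$ over the six.

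Second I would simplify the objective. In (\ref{SER system_1}), summing the first two items over $i$ yields $2(N-1)\sum_{k}P(e_{x_k})$, which is independent of $\mathbf{F}$; thus minimizing $P_e^U(\mathbf{F})$ is equivalent to minimizing $T(\mathbf{F}):=\sum_{i=1}^{3}\sum_{k=1}^{2}P(e_{i,r_k})\,|\mathbf{G}_{i,k}|$, with $\mathbf{G}_{i,k}$ the $k$th column of $\mathbf{F}_i^{-1}$ and $|\cdot|$ the Hamming weight. A $2\times 2$ invertible matrix over $\mathrm{GF}(2)$ has column-weight pair $(1,1)$ (if it is a permutation matrix) or $(1,2)$/$(2,1)$ (otherwise, i.e. if it has the column $(1,1)^T$). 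Working out $\mathbf{F}_i^{-1}$ for all six candidates shows that in every case exactly two of the six weights $|\mathbf{G}_{i,k}|$ equal $2$ and the other four equal $1$, and that across the six candidates the positions of the two ``$2$''s run over all six partial-permutation patterns of the $3\times 2$ index grid. So $T(\mathbf{F})=\sum_{i,k}P(e_{i,r_k})+\sum_{(i,k)\in S(\mathbf{F})}P(e_{i,r_k})$ with $S(\mathbf{F})$ a partial permutation, and for (\ref{3 user matrix}) one computes that $\mathbf{F}_1^{-1},\mathbf{F}_2^{-1},\mathbf{F}_3^{-1}$ have weight patterns $(1,1),(2,1),(1,2)$, so $S=\{(2,1),(3,2)\}$.

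Third I would show $S=\{(2,1),(3,2)\}$ attains the minimum of $\sum_{(i,k)\in S}P(e_{i,r_k})$ over the six patterns. By the channel- and power-ordering assumptions stated above, the relay-to-$U_i$ SNR grows with $i$ and the relay power decays with the slot index $k$, so (up to the common noise level) $\gamma_{i,r_k}$ is the product of these two and is increasing in $i$, decreasing in $k$; hence $P(e_{i,r_k})=\tfrac12-\tfrac12\sqrt{\gamma_{i,r_k}/(1+\gamma_{i,r_k})}$ is decreasing in $i$ and increasing in $k$. Three of the five competitors — the two patterns containing $(1,1)$ and the pattern $\{(1,2),(2,1)\}$ — are dominated by $\{(2,1),(3,2)\}$ term by term using only these monotonicities. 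The remaining ``crossing'' comparisons, against $\{(2,2),(3,1)\}$ and $\{(1,2),(3,1)\}$, are not decided by monotonicity alone; for these I would use that (\ref{user i error upper bound 1}) is a high-SNR bound, so $P(e_{i,r_k})\approx 1/(4\gamma_{i,r_k})$ with $\gamma_{i,r_k}=c_i p_k$, $c_i$ increasing and $p_k$ decreasing. Then $[P(e_{2,r_2})+P(e_{3,r_1})]-[P(e_{2,r_1})+P(e_{3,r_2})]\approx\tfrac14\bigl(\tfrac{1}{p_2}-\tfrac{1}{p_1}\bigr)\bigl(\tfrac{1}{c_2}-\tfrac{1}{c_3}\bigr)>0$, and the $\{(1,2),(3,1)\}$ comparison is handled the same way; conceptually, $1/(c_i p_k)$ has decreasing differences in $(i,k)$, so by the rearrangement inequality the minimizing partial permutation is the anti-monotone one, namely $\{(2,1),(3,2)\}$. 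Combining the three steps proves the lemma. I expect the crossing comparison to be the only real obstacle: it is the one point where monotonicity is insufficient and one must use the multiplicative SNR model together with the curvature of the BPSK error curve, while the rest reduces to enumerating the six $2\times 2$ inverses.
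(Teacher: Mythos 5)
Your proposal is correct in substance, but it follows a genuinely different route from the paper's Appendix E, and the comparison is instructive. The paper makes the same admissibility reduction (the columns of ${\bf F}$ must be the three nonzero vectors of GF$(2)^2$), but it never enumerates the six candidates or passes to the weight pattern of the columns of ${\bf F}_i^{-1}$: instead it expands the $N=3$ bound explicitly in the entries $f_{j,i}$ (the $2\times2$ adjugate turns each $|{\bf G}_{i,k}|$ into a row weight of ${\bf F}_i$), regroups (\ref{3 user error probability}) as a linear form in the $f_{j,i}$ whose coefficients are pairwise sums of relay-phase error expectations, ranks those sums via (\ref{3 user order 1})--(\ref{3 user order 2}), and greedily sets to one the entries multiplying the smallest sums. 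Your normalization (dropping the ${\bf F}$-independent items), the observation that every admissible ${\bf F}$ yields exactly two weight-$2$ inverse columns whose positions form a partial permutation, and the monotonicity-plus-rearrangement finish make the combinatorics transparent, and you have put your finger on the one delicate point: the crossing comparisons such as $P(e_{2,r_1})+P(e_{3,r_2})$ versus $P(e_{2,r_2})+P(e_{3,r_1})$ are not decided by the stated monotonicities, and you close them with the multiplicative high-SNR model $\gamma_{i,r_k}=c_ip_k$, $P\approx 1/(4\gamma)$. This is precisely where the paper's own argument is loose: the chain (\ref{3 user order 2}) asserts inequalities (e.g. $E[r_2\oplus\tilde r_{2,2}]+E[r_2\oplus\tilde r_{3,2}]<E[r_1\oplus\tilde r_{1,1}]+E[r_1\oplus\tilde r_{2,1}]$) that do not follow from (\ref{3 user order 1}) alone, and its greedy conclusion ``$f_{2,1}=1$, $f_{1,1}=f_{2,2}=1$'' as written corresponds to the competitor with columns $(1,1)^T,(0,1)^T,(1,0)^T$, i.e. your pattern $\{(2,2),(3,1)\}$, rather than to (\ref{3 user matrix}). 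So the extra quantitative assumption you invoke is not a blemish but a necessity: under the purely qualitative orderings the comparison between $\{(2,1),(3,2)\}$ and $\{(2,2),(3,1)\}$ is genuinely undetermined, and some model of how the SNR factors over users and slots (yours, or the $3$\,dB-step model used in the paper's simulations) is needed to single out (\ref{3 user matrix}); your enumeration/rearrangement argument buys rigor and locates the needed hypothesis, while the paper's coefficient-ordering argument is shorter but papers over exactly that step.
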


\begin{proof}
See Appendix E.
\end{proof}

In the following, the designed network coding matrices for the systems with more than three users will
be discussed. We suppose the designed encoding matrix of $N-1$ users is a ${\left({N-2}\right) \times
\left({N-1}\right)}$ matrix, which is evaluated by
%\( \left[ {\begin{array}{*{20}c}
%   {{\bf{1}}_{N - 2 \times 1} } & {{\bf{I}}_{N - 2} }  \\
%\end{array}} \right]
%\),
\begin{equation}
\label{N-1 user matrix}
\begin{split}
{\bf{F}}_{|N-1 \ users}=\left[ {\begin{array}{*{20}c}
   1 & 1 & 0 &  \cdots  & 0  \\
   1 & 0 & 1 &  \cdots  & 0  \\
    \vdots  &  \vdots  &  \vdots  &  \ddots  &  \vdots   \\
   1 & 0 & 0 &  \cdots  & 1  \\
\end{array}} \right]_{\left({N-2}\right) \times
\left({N-1}\right)}.
\end{split}
\end{equation}

Based on this assumption and Lemma 5, if the designed encoding matrix of $N$ users is the matrix
described in (\ref{N user matrix}), using mathematical induction, the matrix given in (\ref{N user
matrix}) is the matrix we want. Using (\ref{N-1 user matrix}) and the proposed network coding matrix
design scheme, the designed encoding matrix of $N$ users can be written as
%\left[ {\begin{array}{*{20}c}
%   {{\bf{1}}_{N - 2 \times 1} } & {} & {{\bf{I}}_{N - 2} } & {} & {{\bf{B}}_{N - 2 \times 1} }  \\
%   {u_1 } & {u_2 } & {...} & {u_{N - 1} } & {u_N }  \\
%\end{array}} \right]
\begin{equation}
\label{N user matrix_x}
\begin{split}
{\bf{F}}_{|N \ users}=\left[ {\begin{array}{*{20}c}
   1 & 1 & 0 &  \cdots  & 0  & b_1\\
   1 & 0 & 1 &  \cdots  & 0  & b_2\\
    \vdots  &  \vdots  &  \vdots  &  \ddots  &  \vdots  &  \vdots  \\
   1 & 0 & 0 &  \cdots  & 1  & b_{N-2}\\
   u_1 & u_2 & u_3 &  \cdots  & u_{N-1} & u_{N} \\
\end{array}} \right]_{(N-1) \times N},
\end{split}
\end{equation}where $b_1$, $b_2$, $\ldots$, $b_{N-2}$, \( {u_1 } \), \( {u_2 } \), $\ldots$, \( {u_{N} }
\) are unknown elements, which are in GF(2).

Using Theorem 1, each element of the last column vector, $b_1$, $b_2$, $\ldots$, $b_{N-2}$, should be
the sum of the other elements in its row vector in GF(2). For the first $N-2$ row vectors, since there
are just two ``$1$" elements expect the last column in one row, we have $b_1=b_2=\ldots=b_{N-2}=0$.

Since \( {{\bf{F}}_{i} } \) should be full rank for any $i$, each column vector of \( {{\bf{F}}_{i} }
\) should not be a zero column vector. Thus, in the last column vector, \( {u_{N} } \) should be $1$,
since the other elements in this column vector are all $0$. The designed encoding matrix of $N$ users
can be rewritten as
\begin{equation}
\label{N user matrix_1}
\begin{split}
{\bf{F}}_{|N \ users}=\left[ {\begin{array}{*{20}c}
   1 & 1 & 0 &  \cdots  & 0  & 0\\
   1 & 0 & 1 &  \cdots  & 0  & 0\\
    \vdots  &  \vdots  &  \vdots  &  \ddots  &  \vdots  &  \vdots  \\
   1 & 0 & 0 &  \cdots  & 1  & 0\\
   u_1 & u_2 & u_3 &  \cdots  & u_{N-1} & 1 \\
\end{array}} \right]_{(N-1) \times N},
\end{split}
\end{equation}where using Theorem 1, we have
\begin{equation}
\label{uk relationship}
\begin{split}
{\sum\limits_{ k = 1}^{N - 1} {^\oplus u_k } =1}.
\end{split}
\end{equation}

Thus, only \( {u_1 } \), \( {u_2 } \), $\ldots$, \( {u_{N-1} } \), which are the elements of the last
row vector, are left to be designed to improve the system error performance. From (\ref{SER
system_1}), it can be seen that the network decoding matrices have considerable impact on the SEP of
the system. In the following, we need to acquire the network decoding matrix ${\bf{F}}_i^{ - 1}$ for
$i \in \left[ {1,N } \right]$. Elementary row operations are used to obtain the network decoding
matrix.

For the first user, using elementary row operations, we have
\begin{equation}
\label{N user H_1 1}
\begin{split}
 [\begin{array}{*{20}c}
   {{\bf{F}}_{1| N \ users} } & {\bf{I}}  \\
\end{array}] &= \left[ {\left. {\begin{array}{*{20}c}
   1 & 0 &  \cdots  & 0 & 0  \\
   0 & 1 &  \cdots  & 0 & 0  \\
    \vdots  &  \vdots  &  \ddots  &  \vdots  &  \vdots   \\
   0 & 0 &  \cdots  & 1 & 0  \\
   {u_2 } & {u_3 } &  \cdots  & {u_{N - 1} } & 1  \\
\end{array}} \right|\begin{array}{*{20}c}
   1 & 0 &  \cdots  & 0 & 0  \\
   0 & 1 &  \cdots  & 0 & 0  \\
    \vdots  &  \vdots  &  \ddots  &  \vdots  &  \vdots   \\
   0 & 0 &  \cdots  & 1 & 0  \\
   0 & 0 &  \cdots  & 0 & 1  \\
\end{array}} \right]_{(N - 1) \times (2N - 2)}  \\
  &\to \left[ {\left. {\begin{array}{*{20}c}
   1 & 0 &  \cdots  & 0 & 0  \\
   0 & 1 &  \cdots  & 0 & 0  \\
    \vdots  &  \vdots  &  \ddots  &  \vdots  &  \vdots   \\
   0 & 0 &  \cdots  & 1 & 0  \\
   0 & 0 &  \cdots  & 0 & 1  \\
\end{array}} \right|\begin{array}{*{20}c}
   1 & 0 &  \cdots  & 0 & 0  \\
   0 & 1 &  \cdots  & 0 & 0  \\
    \vdots  &  \vdots  &  \ddots  &  \vdots  &  \vdots   \\
   0 & 0 &  \cdots  & 1 & 0  \\
   {u_2 } & {u_3 } &  \cdots  & {u_{N - 1} } & 1  \\
\end{array}} \right]_{(N - 1) \times (2N - 2)}  ,
\end{split}
\end{equation}where in the last step, the first $N-2$ row vectors are multiplied by
different coefficient, for example the $j$th row vector is multiplied by $u_{j+1}$ for $j \le N-2$,
and then are all added to the last row vector. From (\ref{N user H_1 1}), the inverse matrix of \(
{{\bf{F}}_{1} } \) is
\begin{equation}
\label{N user H_1_-1}
\begin{split}
{\bf{F}}_{1| N \ users}^{ - 1}  =\left[ {\begin{array}{*{20}c}
    1 & 0 &  \cdots  & 0  & 0\\
    0 & 1 &  \cdots  & 0  & 0\\
      \vdots  &  \vdots  &  \ddots  &  \vdots  &  \vdots  \\
    0 & 0 &  \cdots  & 1  & 0\\
    u_2 & u_3 &  \cdots  & u_{N-1} & 1 \\
\end{array}} \right]_{(N-1) \times (N-1)}= {\bf{F}}_1 .
\end{split}
\end{equation}

For the second user, using elementary row operations, we have
\begin{equation}
\label{N user H_2_2}
\begin{split}
 \left[ {\begin{array}{*{20}c}
   {{\bf{F}}_{2|N \ users} } & {\bf{I}}  \\
\end{array}} \right] = \left[ {\left. {\begin{array}{*{20}c}
   1 & 0 &  \cdots  & 0 & 0  \\
   1 & 1 &  \cdots  & 0 & 0  \\
    \vdots  &  \vdots  &  \ddots  &  \vdots  &  \vdots   \\
   1 & 0 &  \cdots  & 1 & 0  \\
   {u_1 } & {u_3 } &  \cdots  & {u_{N - 1} } & 1  \\
\end{array}} \right|\begin{array}{*{20}c}
   1 & 0 &  \cdots  & 0 & 0  \\
   0 & 1 &  \cdots  & 0 & 0  \\
    \vdots  &  \vdots  &  \ddots  &  \vdots  &  \vdots   \\
   0 & 0 &  \cdots  & 1 & 0  \\
   0 & 0 &  \cdots  & 0 & 1  \\
\end{array}} \right]_{(N - 1) \times (2N - 2)}  \\
  \to \left[ {\left. {\begin{array}{*{20}c}
   1 & 0 &  \cdots  & 0 & 0  \\
   0 & 1 &  \cdots  & 0 & 0  \\
    \vdots  &  \vdots  &  \ddots  &  \vdots  &  \vdots   \\
   0 & 0 &  \cdots  & 1 & 0  \\
   0 & {u_3 } &  \cdots  & {u_{N - 1} } & 1  \\
\end{array}} \right|\begin{array}{*{20}c}
   1 & 0 &  \cdots  & 0 & 0  \\
   1 & 1 &  \cdots  & 0 & 0  \\
    \vdots  &  \vdots  &  \ddots  &  \vdots  &  \vdots   \\
   1 & 0 &  \cdots  & 1 & 0  \\
   {u_1 } & 0 &  \cdots  & 0 & 1  \\
\end{array}} \right]_{(N - 1) \times (2N - 2)}  ,
\end{split}
\end{equation}where in the last step, the first row vector is added directly to other $N-3$ row
vectors except the last row vector, which is added by the first row vector multiplied by $u_1$. Then
for $2 \le j \le N-2$, multiplying the $j$th row vector with $u_{j+1}$ and adding the product to the
last row vector, (\ref{N user H_2_2}) can be rewritten as
\begin{equation}
\label{N user H_2_3}
\begin{split}
 \left[ {\begin{array}{*{20}c}
   {{\bf{F}}_{2|N\ users} } & {\bf{I}}  \\
\end{array}} \right]
 \to \left[ {\left. {\begin{array}{*{20}c}
   1 & 0 &  \cdots  & 0 & 0  \\
   0 & 1 &  \cdots  & 0 & 0  \\
    \vdots  &  \vdots  &  \ddots  &  \vdots  &  \vdots   \\
   0 & 0 &  \cdots  & 1 & 0  \\
   0 & 0 &  \cdots  & 0 & 1  \\
\end{array}} \right|\begin{array}{*{20}c}
   1 & 0 &  \cdots  & 0 & 0  \\
   1 & 1 &  \cdots  & 0 & 0  \\
    \vdots  &  \vdots  &  \ddots  &  \vdots  &  \vdots   \\
   1 & 0 &  \cdots  & 1 & 0  \\
   {\sum {_{k = 1,k \ne 2}^{N - 1} {^ \oplus} u_k } } & {u_3 } &  \cdots  & {u_{N - 1} } & 1  \\
\end{array}} \right]_{(N - 1) \times (2N - 2)} .
\end{split}
\end{equation}

From (\ref{N user H_2_3}), the inverse matrix of \( {{\bf{F}}_{2} } \) can be expressed as
\begin{equation}
\label{N user H_2_-1}
\begin{split}
{\bf{F}}_{2|N \ users}^{-1}=\left[ {\begin{array}{*{20}c}
    1 & 0 &  \cdots  & 0  & 0\\
    1 & 1 &  \cdots  & 0  & 0\\
      \vdots  &  \vdots  &  \ddots  &  \vdots  &  \vdots  \\
    1 & 0 &  \cdots  & 1  & 0\\
{\sum {_{k = 1,k \ne 2}^{N - 1} {^ \oplus}  u_k } }
 & u_3 &  \cdots  & u_{N-1} & 1 \\
\end{array}} \right]_{(N-1) \times (N-1)} .
\end{split}
\end{equation}

In the same way as the inverse matrix of \( {{\bf{F}}_{1} } \) and \( {{\bf{F}}_{2} } \), the inverse
matrix of \( {{\bf{F}}_{i} } \) for $i \in \left[ {3,N-1 } \right]$ can be described as
\begin{equation}
\label{N user H_i_-1}
\begin{split}
{\bf{F}}_{i| N \ users}^{ - 1}  = \left[ {\begin{array}{*{20}c}
   0       &     0    & \cdots   &       0      &    1   &  0    &  \cdots     & 0       & 0\\
   1       &     0    & \cdots   &       0      &    1   &  0    &  \cdots     & 0       & 0\\
   0       &     1    & \cdots   &       0      &    1   &  0    &  \cdots     & 0       & 0\\
   \vdots  &  \vdots  & \ddots   &  \vdots      &\vdots  &\vdots &  \ddots     & \vdots  &\vdots\\
   0       &     0    & \cdots   &       1      &    1   &  0    &  \cdots     & 0       & 0\\
   0       &     0    & \cdots   &       0      &    1   &  0    &  \cdots     & 0       & 0\\
   0       &     0    & \cdots   &       0      &    1   &  1    &  \cdots     & 0       & 0\\
   \vdots  &  \vdots  & \ddots   &  \vdots      &\vdots  &\vdots &  \ddots     & \vdots  &\vdots\\
   0       &     0    & \cdots   &     0        & 1      &  0    &   \cdots     & 1        & 0\\
   u_2     & u_3      & \cdots   &  u_{i-1}  &\!{\sum\limits_{ k = 1,k \ne i}^{N - 1} {^\oplus u_k } }\!& u_{i+1}  &   \cdots     & u_{N-1}  & 1 \\
\end{array}} \right]_{(N-1) \times (N-1)},
\end{split}
\end{equation}and the inverse matrix of \( {{\bf{F}}_{N} } \) equals to
\begin{equation}
\label{N user H_N-1}
\begin{split}
{\bf{F}}_{N| N\ users }^{ - 1}  = \left[ {\begin{array}{*{20}c}
   0      & 0       &  \cdots   &   0      & 1  \\
   1      & 0       &  \cdots   &   0      & 1  \\
   0      & 1       &  \cdots   &   0      & 1  \\
   \vdots &\vdots   &  \ddots   &  \vdots  &  \vdots    \\
   0      & 0       &  \cdots   &   1      & 1  \\
   u_2    & u_3     &  \cdots        & u_{N-1} & 1 \\
\end{array}} \right]_{(N-1) \times (N-1)} .
\end{split}
\end{equation}

We design a matrix ${\bf{\mathord{\buildrel{\lower3pt\hbox{$\scriptscriptstyle\frown$}} \over F}
}}_{|N \ users}$, which can be shown in (\ref{N user matrix}). In the following, we will prove that
using ${\bf{\mathord{\buildrel{\lower3pt\hbox{$\scriptscriptstyle\frown$}} \over F} }}_{|N \ users}$
as the network coding matrix, the upper bound of the SEP of the system is minimized.
%Thus we need to prove that
%${\bf{\mathord{\buildrel{\lower3pt\hbox{$\scriptscriptstyle\frown$}} \over H} }}_{|N \ users}$ is the
%optimal network coding matrix of $N$ users.
%\begin{equation}
%\label{N user suppose}
%\begin{split}
%{\bf{\mathord{\buildrel{\lower3pt\hbox{$\scriptscriptstyle\frown$}}
%\over H} }} =\left[ {\begin{array}{*{20}c}
%   1 & 1 & 0 &  \cdots  & 0  & 0\\
%   1 & 0 & 1 &  \cdots  & 0  & 0\\
%    \vdots  &  \vdots  &  \vdots  &  \ddots  &  \vdots  &  \vdots  \\
%  1 & 0 & 0 &  \cdots  & 1  & 0\\
%   1 & 0 &0 &  \cdots  & 0 & 1 \\
%\end{array}} \right]_{(N-1) \times N}
%\end{split}
%\end{equation}

Based on (\ref{N user H_1_-1}), (\ref{N user H_2_-1}), (\ref{N user
H_i_-1}) and (\ref{N user H_N-1}), using (\ref{SER system}), we have
\begin{equation}
\label{Pe H_- PeH' }
\begin{split}
 &N\left( {P_e \left( {\bf{F}}_{|N \ users} \right) - P_e \left( {{\bf{\mathord{\buildrel{\lower3pt\hbox{$\scriptscriptstyle\frown$}}
\over F} }}} _{|N \ users}\right)} \right)\\
  &= \sum\limits_{k = 1}^{N - 2} {E\left[ { {{r_k} \oplus {{\tilde r}_{1,k}}}  } \right]} u_{k + 1}
  +{E\left[ { {{r_1} \oplus {{\tilde r}_{2,1}}}  } \right]} \left({ {\sum \limits_{k = 1,k \ne 2}^{N - 1} {{^ \oplus}u_k }}} -1\right) +\sum\limits_{k = 2}^{N - 2} {E\left[ { {{r_k} \oplus {{\tilde r}_{2,k}}}  } \right]} u_{k + 1}
  \\
  &\quad + \sum\limits_{i = 3}^{N - 1} {\left\{ {\sum\limits_{k = 1,k \ne i - 1}^{N - 2} {E\left[ { {{r_k} \oplus {{\tilde r}_{i,k}}}  } \right]u_{k + 1} }  + E\left[ { {{r_{i-1}} \oplus {{\tilde r}_{i,i-1}}}  } \right]\left( { {\sum\limits_{k = 1,k \ne i }^{N - 1} {^\oplus u_k } }  - 1} \right)}
  \right\}}\\
  &\quad + \sum\limits_{k = 1}^{N - 2} {E\left[ { {{r_k} \oplus {{\tilde r}_{N,k}}}  } \right]} u_{k + 1}  .
\end{split}
\end{equation}

Using (\ref{uk relationship}), we have ${ {\sum\limits_{k = 1,k \ne
i }^{N - 1} {^\oplus u_k } }  - 1} =-u_i$ and (\ref{Pe H_- PeH' })
can be expressed as
\begin{equation}
\label{Pe H_- PeH' _1}
\begin{split}
&N\left( {{P_e}\left( {{{\bf{F}}_{|N\;users}}} \right) - {P_e}\left(
{{{{\bf{\mathord{\buildrel{\lower3pt\hbox{$\scriptscriptstyle\frown$}}
\over F} }}}_{|N\;users}}} \right)} \right) \\&= \sum\limits_{k = 1}^{N - 2} {E\left[ {{r_k} \oplus {{\tilde r}_{i,k}}} \right]} {u_{k + 1}} - E\left[ {{r_1} \oplus {{\tilde r}_{2,1}}} \right]{u_2} + \sum\limits_{k = 2}^{N - 2} {E\left[ {{r_k} \oplus {{\tilde r}_{2,k}}} \right]} {u_{k + 1}}\\
 & \quad + \sum\limits_{i = 3}^{N - 1} {\left\{ {\sum\limits_{k = 1,k \ne i - 1}^{N - 2} {E\left[ {{r_k} \oplus {{\tilde r}_{i,k}}} \right]{u_{k + 1}}}  - E\left[ {{r_{i - 1}} \oplus {{\tilde r}_{i,i - 1}}} \right]{u_i}} \right\}}  + \sum\limits_{k = 1}^{N - 2} {E\left[ {{r_k} \oplus {{\tilde r}_{N,k}}} \right]} {u_{k + 1}}\\
 &= \sum\limits_{k = 1}^{N - 2} {\left( {E\left[ {{r_k} \oplus {{\tilde r}_{1,k}}} \right] - E\left[ {{r_k} \oplus {{\tilde r}_{k + 1,k}}} \right]} \right){u_{k + 1}}}  + \sum\limits_{k = 2}^{N - 2} {E\left[ {{r_k} \oplus {{\tilde r}_{2,k}}} \right]} {u_{k + 1}}\\
 &\quad + \sum\limits_{i = 3}^{N - 1} {\sum\limits_{k = 1,k \ne i - 1}^{N - 2} {E\left[ {{r_k} \oplus {{\tilde r}_{i,k}}} \right]{u_{k + 1}}} }  + \sum\limits_{k = 1}^{N - 2} {E\left[ {{r_k} \oplus {{\tilde r}_{N,k}}} \right]} {u_{k + 1}}\\
 &\ge 0,
\end{split}
\end{equation}where $ E\left[ { {{r_k} \oplus {{\tilde r}_{1,k}}}  }
\right] > E\left[ { {{r_k} \oplus {{\tilde r}_{k+1,k}}}  } \right]$ for $ k \in \left[ {1,N - 2}
\right]$ is used in the last inequality. From the above expression, since the coefficients of $u_k$'s
are all strictly positive, it can be seen that the requirement of \( N\left( {P_e \left( {\bf{F}} _{|N
\ users}\right) - P_e \left( {{\bf{\mathord{\buildrel{\lower3pt\hbox{$\scriptscriptstyle\frown$}}
\over F} }}}_{|N \ users} \right)} \right)=0\) is that $u_k=0, k=2,3, \ldots, N-1$, which indicates
that only \({\bf{\mathord{\buildrel{\lower3pt\hbox{$\scriptscriptstyle\frown$}} \over F} }}_{|N \
users}\) meets the requirement. Thus the proposed encoding matrix for $N$ users is unique.

Thus \( {{\bf{\mathord{\buildrel{\lower3pt\hbox{$\scriptscriptstyle\frown$}} \over F} }}} \) is the
designed encoding matrix for $N$ users, which minimizes the upper bound of the SEP of the system.
Based on Lemma 5, using mathematical induction, we have the following theorem.

\begin{theorem}
The designed network coding matrix of $N$ users, which minimizes the bound of SEP, is given in (\ref{N
user matrix}).
\end{theorem}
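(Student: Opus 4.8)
The plan is to prove Theorem 4 by mathematical induction on the number of users $N$, using Lemma 5 as the base case and assembling, for the inductive step, the computations already set up in (\ref{N user matrix_x})--(\ref{Pe H_- PeH' _1}). For $N=3$, Lemma 5 asserts directly that the matrix (\ref{3 user matrix}) minimizes the SEP bound, and this matrix is exactly the $N=3$ instance of (\ref{N user matrix}), so the base case is settled.

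For the inductive step, suppose the claim holds for $N-1$ users, i.e. the optimal $(N-2)\times(N-1)$ encoding matrix is (\ref{N-1 user matrix}). Under the puncturing design scheme the $N$-user matrix is then forced into the form (\ref{N user matrix_x}), whose only undetermined entries are the last column $b_1,\dots,b_{N-2}$ and the last row $u_1,\dots,u_N$. First I would apply Theorem 1: because each of the first $N-2$ rows has exactly two nonzero entries apart from the last column, the full-rank requirement on every ${\bf F}_j$ forces $b_1=\dots=b_{N-2}=0$; and since no column of any ${\bf F}_i$ may vanish, the last column being otherwise zero forces $u_N=1$. The matrix then reduces to (\ref{N user matrix_1}) with the single scalar constraint (\ref{uk relationship}), leaving only $u_1,\dots,u_{N-1}$ to be optimized.

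Next I would compute, by elementary row operations, the decoding matrices ${\bf F}_i^{-1}$ for all $i\in[1,N]$, obtaining (\ref{N user H_1_-1}), (\ref{N user H_2_-1}), (\ref{N user H_i_-1}) and (\ref{N user H_N-1}); feed the resulting column weights $|{\bf G}_{i,k}|$ into the SEP bound (\ref{SER system})--(\ref{SER system_1}); and form the difference $N\bigl(P_e({\bf F}_{|N\text{ users}}) - P_e(\widehat{\bf F}_{|N\text{ users}})\bigr)$ against the candidate $\widehat{\bf F}$ of (\ref{N user matrix}), which corresponds to $u_1=1$, $u_2=\dots=u_{N-1}=0$. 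This yields (\ref{Pe H_- PeH' }); substituting $\sum_{k=1,k\ne i}^{N-1}{}^\oplus u_k - 1 = -u_i$ from (\ref{uk relationship}) collapses it to (\ref{Pe H_- PeH' _1}), a linear form in $u_2,\dots,u_{N-1}$ all of whose coefficients are strictly positive. The coefficient positivity rests on the inequality $E[r_k\oplus\tilde r_{1,k}] > E[r_k\oplus\tilde r_{k+1,k}]$, which holds because $U_1$ is by assumption the user statistically farthest from the relay and hence suffers the largest second-phase detection-error probability. Strict positivity of all coefficients makes the difference $\ge 0$ with equality only when $u_2=\dots=u_{N-1}=0$; hence $\widehat{\bf F}$, i.e. the matrix (\ref{N user matrix}), is the unique minimizer of the SEP bound for $N$ users, and the induction closes.

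\emph{Main obstacle.} The delicate part is the bookkeeping inside the inductive step: correctly deriving the general inverse submatrix (\ref{N user H_i_-1}) for arbitrary $i$ from the row reductions, and then checking that after translating the GF(2) operations into ordinary arithmetic and using (\ref{uk relationship}) to cancel terms, every surviving coefficient in (\ref{Pe H_- PeH' _1}) is genuinely strictly positive rather than merely nonnegative --- this is exactly what upgrades ``a minimizer'' to ``the unique minimizer''. The remaining ingredients (Lemma 5, the Theorem 1 reductions, and the outer induction) are routine once this computation is verified.
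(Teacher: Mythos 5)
Your proposal is correct and follows essentially the same route as the paper: Lemma 5 as the $N=3$ base case, the puncturing constraint plus Theorem 1 to reduce the $N$-user matrix to the form with free entries $u_2,\dots,u_{N-1}$, the explicit inverses ${\bf F}_i^{-1}$, and the nonnegativity (with strictly positive coefficients) of the difference in (\ref{Pe H_- PeH' _1}) to conclude uniqueness of the minimizer, closing the induction. The only caveat, shared with the paper itself, is that the optimality is established within the class of matrices respecting the puncturing structure, not over all admissible $(N-1)\times N$ encoding matrices.
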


From Theorem 6, it can be seen that the designed network coding matrix is structured and sparse. The
properties of the matrix simplify the encoding and decoding process while improving system
performance.

In practical systems, the relay first needs to know the number of the users. Moreover, the statistical
information of the user to relay channels should be available to the relay. After receiving the
necessary information, the designed NC matrix is constructed based on its closed-form expression given
by Theorem 6. According to the designed NC matrix, the users will send their information to the relay
in turn. The relay will detect, encode, and broadcast the received information. Finally, each user
decodes its received information by exploiting its own information.

\section{Simulation results}
In this section, the performance of the analytical results will be compared with Monte Carlo
simulations.

As discussed in the Section IV-C, in practical systems, the received power at the relay for different
users is different and so does the transmit power of the relay at different time slot. The conditions
of the simulations are set as follows: the transmit power at different users is the same and the
average received SNR of $U_i$ at the relay is $3$dB worse than that of $U_{i+1}$, due to the different
distances between the users and the relay; the power that the relay uses in time slot $i$ is $3$dB
higher than that in time slot $i+1$, since the power gap between two adjacent bits is $3$dB in some
high order modulations; the smallest transmit power at the relay and the transmit power at one user
are the same; the smallest received SNR at the $U_1$ in the second phase of the system with NC is
denoted as $E_s / N_0$ in the following figures; BPSK modulation is considered. We assume the same
total transmit power of the system with and without NC.

Fig. 2 presents the throughput performance of the system with and
without network coding, for $4$, $5$, and $6$ users. The ``X users
with NC" curves are generated by combining (\ref{exact N user error
probability 1}) and (\ref{throughput}), and the ``X users without
NC" curves are generated by combining (\ref{Sep no nc}) and
(\ref{throughput_no nc}). The NC matrices are given by (\ref{N user
matrix}). From the figure, it can be seen that compared to the
system without NC, NC improves the throughput about $0.21$, $0.22$,
$0.23$ for $4$, $5$, $6$ users respectively. This is predicted by
(\ref{B 3}), which indicates that as the number of the users
increases, the absolute value of the throughput improvement
increases. Moreover, for $4$ users, it can be seen that network
coding improves the throughput about $14 \%$ in the high SNR region,
while the improvement is about $11\%$ for $5$ users and about $9\%$
for $6$ users. For a practical communication service, such as video
conference in which each user may want to have a discussion with the
other users, the number of users is usually limited, e.g., 3 or 4
people when using Damaka$^{\text{TM}}$ \cite{video_conference}.
Furthermore, it is natural that the error performance gain will
decrease as the number of users increases for any MAC strategy. This
indicates that our proposed scheme can improve the throughput of the
system much in widely used scenarios.

In Fig. 3, the simulated and tight upper bounds of the SEP performance of the system with network
coding are compared for different number of users. The ``numerical" curves are generated by (\ref{SER
system_1}) and the NC matrices are given by (\ref{N user matrix}). We can see that the numerical SEP
curves accurately predict the simulation ones. From the figure, the SEP performance is slightly higher
as the number of users increases. It means that the interference between the users is small, which is
caused by the network coding at the relay. Thus when the number of the users is large, NC is still
efficient in improving the throughput with not so much impact on the system error performance.

In Fig. 4, the SEP performance of the system is compared with different network coding matrix, for $4$
users, where the matrices $1$, $2$, $3$ are, respectively,
\[
\left[ {\begin{array}{*{20}c}
   0 & 0 & 1 & 1  \\
   0 & 1 & 0 & 1  \\
   1 & 0 & 0 & 1  \\
\end{array}} \right],
\left[ {\begin{array}{*{20}c}
   0 & 0 & 1 & 1  \\
   0 & 1 & 1 & 0  \\
   1 & 1 & 0 & 0  \\
\end{array}} \right],
\left[ {\begin{array}{*{20}c}
   0 & 1 & 0 & 1  \\
   1 & 0 & 1 & 0  \\
   0 & 0 & 1 & 1  \\
\end{array}} \right].
\]It can be seen that our designed matrix can improve the SEP performance compared to an
ad-hoc coding matrix. However, compared to the system without NC, our proposed scheme achieves a
slightly poor SEP performance. The above simulations assume the same channel conditions of the source
to relay channel and the corresponding relay to source channel. However, in Fig. 5, we will show that
our proposed scheme may achieve a better SEP performance in other channel conditions.

In Fig. 5 when the source to relay link is $20$dB better than the corresponding relay to source link,
it can be seen that our proposed scheme achieves about $2$dB SEP gain compared to the system without
NC. This indicates that our proposed scheme can improve both the throughput and the error performance
in the system where the source to relay link is better than the corresponding relay to source link,
such as one satellite assists the information exchange of several base stations.

%Fig. 5 shows the SEP caused by the errors in the second phase with the same network coding matrices
%used in Fig. 4, for $4$ users. From Fig. 5, it can be shown that the designed network coding matrix
%improves the SEP caused by the errors in the second phase about $2.5$dB. It also can be seen that NC
%reduces the errors in the second phase compared to the system without NC. This is comprehensible since
%in the relay transmission phase, the system without NC needs to transmit $N$ symbols while the system
%with NC only needs to transmit $N-1$ symbols. In the restriction of the total power, the transmit
%power of each symbol of the system with NC is higher than that of the system without NC.

\section{Conclusions}
We have investigated the design of binary linear NC for $N$-way relay networks, where $N$ end nodes
exchange their information with the assistance of one $N$-way relay. NC matrix in GF(2) was proposed
to describe the linear NC process. The design criteria of the NC matrix, which improve the SEP
performance, were provided. Moreover, the closed form expressions and the upper bound of SEP of the
system were given. It can be seen that using linear NC, the throughput gain of the system is more than
$10\%$ for less than $6$ users. To improve the system performance further, we designed NC matrices for
arbitrary number of users, which minimized the bound of SEP.

\appendices
\section{Proof of Theorem 1}
First, we need to prove that it is the necessary condition. We suppose that all the network
sub-encoding matrices are full rank. Since \({\bf{F}}_i\) is full rank, the $N-1$ column vectors of \(
{\bf{F}}_i  \) form a $\left({N-1}\right)$-dimensional linear vector space in GF(2) and any other
column vector in GF(2) which contains $N-1$ elements is in this linear vector space. Then \({\bf{f}}_i
\) can be expressed as a kind of linear combining of the column vectors of \( {\bf{F}}_i \). That is
\begin{equation}
\label{Lemma1_2}
\begin{split}
{\bf{f}}_i  = \sum\limits_{k = 1,k \ne i}^N {^ \oplus \beta _k {\bf{f}}_k },
\end{split}
\end{equation}where \( {\beta _k } \) can only be $0$ or $1$. For
any user $j$, $j \ne i$, if \( {\beta _j } \) in (\ref{Lemma1_2}) is $0$, then \( {{\bf{F}}_j } \)
only has no more than $N-2$ linearly independent column vectors. Thus, \( {{\bf{F}}_j } \) is not full
rank. That is a contradiction of the hypothesis. So \( {\beta _k } \) in (\ref{Lemma1_2}) should be
$1$, for $ j \in \left[ {1,N} \right],j \ne i $. Thus we have proved that it is the necessary
condition.

Second, we need to prove that it is the sufficient condition. For \( j,j \ne i \), ${\bf{F}}_j$ can be
spread as
\begin{equation}
\label{Lemma1_3}
\begin{split}
 {\bf{F}}_j  &= \left[ {\begin{array}{*{20}c}
   {{\bf{f}}_1 }  &  \cdots  & {{\bf{f}}_{i - 1} } & {{\bf{f}}_i } & {{\bf{f}}_{i + 1} } &  \cdots  & {{\bf{f}}_{j - 1} } & {{\bf{f}}_{j + 1} } &  \cdots  & {{\bf{f}}_N }  \\
\end{array}} \right] _{(N-1) \times (N-1)}\\
  &= \left[ {\begin{array}{*{20}c}
   {{\bf{f}}_1 }  &  \cdots  & {{\bf{f}}_{i - 1} } & {\sum\limits_{k = 1,k \ne i}^N {^ \oplus {\bf{f}}_k } } & {{\bf{f}}_{i + 1} } &  \cdots  & {{\bf{f}}_{j - 1} } & {{\bf{f}}_{j + 1} } &  \cdots  & {{\bf{f}}_N }  \\
\end{array}} \right]_{(N-1) \times (N-1)} ,
\end{split}
\end{equation}where (\ref{Lemma1}) is used in the last step. Then
we add all the column vectors except the $i$th column vector onto the $i$th column in GF(2), which is
the elementary column operation on ${\bf{F}}_j$. Thus ${\bf{F}}_j$ can be expressed as
\begin{equation}
\label{Lemma1_3_2}
\begin{split}
  {\bf{F}}_j &\to \left[ {\begin{array}{*{20}c}
   {{\bf{f}}_1 }  &  \cdots  & {{\bf{f}}_{i - 1} } & {{\bf{f}}_j } & {{\bf{f}}_{i + 1} } &  \cdots  & {{\bf{f}}_{j - 1} } & {{\bf{f}}_{j + 1} } &  \cdots  & {{\bf{f}}_N }  \\
\end{array}} \right]_{(N-1) \times (N-1)} \\
  &\to \left[ {\begin{array}{*{20}c}
   {{\bf{f}}_1 }  &  \cdots  & {{\bf{f}}_{i - 1} } & {{\bf{f}}_{i + 1} } &  \cdots  & {{\bf{f}}_{j - 1} } & {{\bf{f}}_j } & {{\bf{f}}_{j + 1} } &  \cdots  & {{\bf{f}}_N }  \\
\end{array}} \right] _{(N-1) \times (N-1)}\\
  &= {\bf{F}}_i  .
\end{split}
\end{equation}

From (\ref{Lemma1_3_2}), it can be seen that ${\bf{F}}_j$ has the same rank of ${\bf{F}}_i$, which
means that \( {{\bf{F}}_j } \) is full rank if \( {{\bf{F}}_i } \) is full rank. Thus we have proved
that it is the sufficient condition and the theorem is proved.

\section{Proof of Theorem 2}
Using (\ref{the matrix of the coding in relay 2}), Eq. (\ref{decode 1_4}) can be rewritten as
\begin{equation}
\label{Proof1_3}
\begin{split}
{{{\bf{\hat x}}}_i} &= \left\{ {{\bf{F}}_i^{ - 1}\left\{ {{\bf{r}} \oplus {\bf{r}} \oplus {{{\bf{\tilde r}}}_i} \oplus \left\{ {\left( {{{\bf{f}}_i}{x_i}} \right)\bmod \left( 2 \right)} \right\}} \right\}} \right\}\bmod \left( 2 \right)\\
 &= \left\{ {{\bf{F}}_i^{ - 1}\left( {{\bf{r}} \oplus \left\{ {\left( {{{\bf{f}}_i}{x_i}} \right)\bmod \left( 2 \right)} \right\} \oplus \left( {{\bf{r}} \oplus {{{\bf{\tilde r}}}_i}} \right)} \right)} \right\}\bmod \left( 2 \right)\\
 &= \left\{ {{\bf{F}}_i^{ - 1}\left( {\underbrace {\left\{ {\left( {{\bf{F\tilde x}}} \right)\bmod \left( 2 \right)} \right\} \oplus \left\{ {\left( {{{\bf{f}}_i}{x_i}} \right)\bmod \left( 2 \right)} \right\}}_{ \buildrel \Delta \over = {I_1}} \oplus \left( {{\bf{r}} \oplus {{{\bf{\tilde r}}}_i}} \right)} \right)} \right\}\bmod \left( 2
 \right).
\end{split}
\end{equation}

The item $I_1$ in the above equation can be expressed as
\begin{equation}
\label{Proof1_4_reason 1}
\begin{split}
  I_1 &= \left[ {\begin{array}{*{20}c}
   {\sum\limits_{k = 1}^{k = N} {^\oplus f_{1 ,k } \tilde x_k } }  \\
   {\sum\limits_{k = 1}^{k = N} {^\oplus f_{2 ,k } \tilde x_k } }  \\
   {\cdots }  \\
   {\sum\limits_{k = 1}^{k = N} {^\oplus f_{N - 1 ,k } \tilde x_k } }  \\
\end{array}} \right] _{(N-1) \times 1} \oplus \left[ {\begin{array}{*{20}c}
   {f_{1 ,i } x_i }  \\
   {f_{2 ,i } x_i }  \\
   {\cdots}  \\
   {f_{N - 1 ,i } x_i }  \\
\end{array}} \right] _{(N-1) \times 1},
\end{split}
\end{equation}and the above expression can be rewritten
as
\begin{equation}
\label{Proof1_4_reason}
\begin{split}
   I_1&= \left[ {\begin{array}{*{20}c}
   {\sum\limits_{k = 1,k \ne i}^{k = N} {^\oplus f_{1 ,k } \tilde x_k } }  \\
   {\sum\limits_{k = 1,k \ne i}^{k = N} {^\oplus f_{2 ,k } \tilde x_k } }  \\
   {\cdots}  \\
   {\sum\limits_{k = 1,k \ne i}^{k = N} {^\oplus f_{N - 1 ,k } \tilde x_k } }  \\
\end{array}} \right]_{(N-1) \times 1} \oplus \left[ {\begin{array}{*{20}c}
   {f_{r_1 ,x_i } {\left( {{x_i} \oplus {{\tilde x}_i}} \right)} }  \\
   {f_{r_2 ,x_i } {\left( {{x_i} \oplus {{\tilde x}_i}} \right)} }  \\
   {\cdots}  \\
   {f_{r_{N - 1} ,x_i } {\left( {{x_i} \oplus {{\tilde x}_i}} \right)} }  \\
\end{array}} \right]_{(N-1) \times 1} \\
  &= \left\{ {\left( {{{\bf{F}}_i}{{{\bf{\tilde x}}}_i}} \right)\bmod \left( 2 \right)} \right\} \oplus \left\{ {\left( {{{\bf{f}}_i}\left( {{x_i} \oplus {{\tilde x}_i}} \right)} \right)\bmod \left( 2 \right)} \right\},
\end{split}
\end{equation}substituting (\ref{Proof1_4_reason}) into (\ref{Proof1_3}), (\ref{Proof1_3}) can be
written as
\begin{equation}
\label{Proof1_4}
\begin{split}
 {{{\bf{\hat x}}}_i} = \left\{ {{\bf{F}}_i^{ - 1}\left\{ {\left\{ {\left( {{{\bf{F}}_i}{{{\bf{\tilde x}}}_i}} \right)\bmod \left( 2 \right)} \right\} \oplus \left\{ {\left( {{{\bf{f}}_i}\left( {{x_i} \oplus {{\tilde x}_i}} \right)} \right)\bmod \left( 2 \right)} \right\} \oplus \left( {{\bf{r}} \oplus {{{\bf{\tilde r}}}_i}} \right)} \right\}} \right\}\bmod \left( 2 \right) .
\end{split}
\end{equation}

Since GF(2) is Galois field, distributive law of multiplication
exists. Using distributive law of multiplication, (\ref{Proof1_4})
equals to
\begin{equation}
\label{Proof1_4_1}
\begin{split}
{{{\bf{\hat x}}}_i} &= \left\{ {\left( {{\bf{F}}_i^{ - 1}{{\bf{F}}_i}{{{\bf{\tilde x}}}_i}} \right)\bmod \left( 2 \right)} \right\} \oplus \left\{ {\left( {{\bf{F}}_i^{ - 1}{{\bf{f}}_i}\left( {{x_i} \oplus {{\tilde x}_i}} \right)} \right)\bmod \left( 2 \right)} \right\} \oplus \left\{ {\left( {{\bf{F}}_i^{ - 1}\left( {{\bf{r}} \oplus {{{\bf{\tilde r}}}_i}} \right)} \right)\bmod \left( 2 \right)} \right\}\\
 &= {{{\bf{\tilde x}}}_i} \oplus \left\{ {\left( {{\bf{F}}_i^{ - 1}{{\bf{f}}_i}\left( {{x_i} \oplus {{\tilde x}_i}} \right)} \right)\bmod \left( 2 \right)} \right\} \oplus \left\{ {\left( {{\bf{F}}_i^{ - 1}\left( {{\bf{r}} \oplus {{{\bf{\tilde r}}}_i}} \right)} \right)\bmod \left( 2 \right)} \right\}
 .
\end{split}
\end{equation}

Using Theorem 1, since ${{\bf{F}}_i}$ should be full rank, we have
\begin{equation}
\label{Lemma1_3}
\begin{split}
   {{\bf{f}}_i} &
     = \sum\limits_{k = 1,k \ne i}^N {^\oplus {\bf{f}}_k }   \\
    &
 = \left( {\left[ {\begin{array}{*{20}{c}}
{{{\bf{f}}_1}}&{...}&{{{\bf{f}}_{i - 1}}}&{{{\bf{f}}_{i + 1}}}&{...}&{{{\bf{f}}_N}}
\end{array}} \right]{{\bf{1}}_{N - 1 \times 1}}} \right)\bmod \left( 2 \right)\\
 &= \left( {{{\bf{F}}_i}{{\bf{1}}_{N - 1 \times 1}}} \right)\bmod \left( 2 \right)
.
\end{split}
\end{equation}

%Multiplying ${{\bf{F}}_i^{-1}}$ on the two sides of
%(\ref{Lemma1_3}), we have
%\begin{equation}
%\label{Lemma1_4}
%\begin{split}
%{\bf{F}}_i ^{ - 1} \odot {\bf{f}}_i  = {\bf{1}}_{N - 1 \times 1}
%\end{split}
%\end{equation}
Based on (\ref{Lemma1_3}), Eq. (\ref{Proof1_4_1}) can be expressed
as
\begin{equation}
\label{X_i and X'_i}
\begin{split}
{{{\bf{\hat x}}}_i} = {{{\bf{\tilde x}}}_i} \oplus \left( {\left( {{x_i} \oplus {{\tilde x}_i}}
\right){{\bf{1}}_{N - 1 \times 1}}} \right) \oplus \left\{ {\left( {{\bf{F}}_i^{ - 1}\left( {{\bf{r}}
\oplus {{{\bf{\tilde r}}}_i}} \right)} \right)\bmod \left( 2 \right)} \right\},
\end{split}
\end{equation}where ${\bf{F}}_i ^{ -
1}  {\bf{F}}_i ={\bf{I}}$ is used. Thus the theorem is proved.

\section{Proof of Lemma 3}
Mathematical induction is proposed to prove the theorem. First,
consider the addition of two numbers in GF(2),
\begin{equation}
\label{proof_XOR_1}
\begin{split}
a \oplus b = a + b - 2ab.
  \end{split}
\end{equation}

Then we assume that the addition of $Q$ numbers in GF(2) can be
written as
\begin{equation}
\label{proof_XOR_2}
\begin{split}
\sum\limits_{q = 1}^Q {^ \oplus  a_q }  = \sum\limits_{q = 1}^Q {\left( { - 2} \right)^{q - 1}
\sum\limits_{{1 \le {p_1} < {p_2} <  \cdots  < {p_q} \le Q} }^{} {\prod\limits_{j = 1}^q {a_{p_j } } }
}.
  \end{split}
\end{equation}

Using (\ref{proof_XOR_1}), the addition of $Q+1$ numbers in GF(2)
can be expressed as
\begin{equation}
\label{proof_XOR_3}
\begin{split}
 \sum\limits_{q = 1}^{Q + 1} {^ \oplus  a_q }  &= a_{Q + 1}  \oplus \sum\limits_{q = 1}^Q {^ \oplus  a_q }  \\
  &= a_{Q + 1}  + \sum\limits_{q = 1}^Q {^ \oplus  a_q }  - 2a_{Q + 1} \sum\limits_{q = 1}^Q {^ \oplus  a_q
  },
  \end{split}
\end{equation}and taking (\ref{proof_XOR_2}) into the above expression,
(\ref{proof_XOR_3}) can be rewritten as
\begin{equation}
\label{proof_XOR_4}
\begin{split}
 \sum\limits_{q = 1}^{Q + 1} {^ \oplus  a_q }  &= a_{Q + 1}  + \sum\limits_{q = 1}^Q {\left( { - 2} \right)^{q - 1} \sum\limits_{q = 1}^Q {\left( { - 2} \right)^{q - 1} \sum\limits_{{1 \le {p_1} < {p_2} <  \cdots  < {p_q} \le Q} }^{} {\prod\limits_{j = 1}^q {a_{p_j } } } } }  \\
  &\quad - 2a_{Q + 1} \sum\limits_{q = 1}^Q {\left( { - 2} \right)^{q - 1} \sum\limits_{{1 \le {p_1} < {p_2} <  \cdots  < {p_q} \le Q}}^{ } {\prod\limits_{j = 1}^q {a_{p_j } } } }
  .
   \end{split}
\end{equation}

Separate the $q=1$ term from the second item of (\ref{proof_XOR_4})
and separate the $q=Q+1$ term from the last item of
(\ref{proof_XOR_4}), we have
\begin{equation}
\label{proof_XOR_5}
\begin{split}
  \sum\limits_{q = 1}^{Q + 1} {^ \oplus  a_q } &= a_{Q + 1}  + \sum\limits_{p_1 }^Q {a_{p_1 } }  + \sum\limits_{q = 2}^Q {\left( { - 2} \right)^{q - 1} \sum\limits_{{1 \le {p_1} < {p_2} <  \cdots  < {p_q} \le Q}}^{ } {\prod\limits_{j = 1}^q {a_{p_j } } } }  \\
 &\quad +\sum\limits_{q = 2}^Q {\left( { - 2} \right)^{q - 1} \sum\limits_{{1 \le {p_1} < {p_2} <  \cdots  < {p_q} \le Q+1}}^{ } {\prod\limits_{j = 1}^q {a_{p_j } } } }  + \left( { - 2} \right)^Q \prod\limits_{p_1  = 1}^{Q + 1} {a_{p_1 }
 }.\\
   \end{split}
\end{equation}

Next, combining the first item and second item of
(\ref{proof_XOR_5}), and combining the third item and forth item of
(\ref{proof_XOR_5}), the above expression can be reformulated as
\begin{equation}
\label{proof_XOR_5}
\begin{split}
 \sum\limits_{q = 1}^{Q + 1} {^ \oplus  a_q } &= \sum\limits_{p_1 }^{Q + 1} {a_{p_1 } }  + \sum\limits_{q = 2}^{Q +
1} {\left( { - 2} \right)^{q - 1} \sum\limits_{{1 \le {p_1} < {p_2} <  \cdots  < {p_q} \le Q+1}}^{}
{\prod\limits_{j = 1}^q {a_{p_j } } } }  + \left( { - 2} \right)^Q \prod\limits_{p_1  = 1}^{Q + 1}
{a_{p_1 } }\\
 &= \sum\limits_{q = 1}^{Q + 1} {\left( { - 2} \right)^{q - 1} \sum\limits_{{1 \le {p_1} < {p_2} <  \cdots  < {p_q} \le Q+1}}^{ } {\prod\limits_{j = 1}^q {a_{p_j } } }
 }.
   \end{split}
\end{equation}

From the above derivation, (\ref{GF(2) add lemma}) is convenient for the addition of two numbers in
GF(2). Moreover, based on the addition of $Q$ numbers in GF(2), (\ref{GF(2) add lemma}) is also
convenient for the addition of $Q+1$ numbers in GF(2). Thus, using mathematical induction, for
arbitrary numbers, the addition in GF(2) can be expressed as (\ref{GF(2) add lemma}) and Lemma 3 is
proved.

\section{Proof of Lemma 4}
Eq. (\ref{GF(2) add}) is obvious and we focus on (\ref{GF(2) multiply}). We define that
$[{\bf{A}}]_{i,j}=a_{i,j}$ and $[{\bf{B}}]_i=b_i$. Then the multiplication between matrix ${\bf{A}}$
and ${\bf{B}}$ in GF(2) can be expressed as
\begin{equation}
\label{A mulitply B in GF(2)}
\begin{split}
\left( {{\bf{AB}}} \right)\bmod \left( 2 \right) = \left[ {\begin{array}{*{20}c}
   {\sum\limits_{i = 1}^M {^ \oplus  a_{1,i} b_i } }  \\
   {\sum\limits_{i = 1}^M {^ \oplus  a_{2,i} b_i } }  \\
    \vdots   \\
   {\sum\limits_{i = 1}^M {^ \oplus  a_{L,i} b_i } }  \\
\end{array}} \right]_{M \times 1}  \le \left[ {\begin{array}{*{20}c}
   {\sum\limits_{i = 1}^M {a_{1,i} b_i } }  \\
   {\sum\limits_{i = 1}^M {a_{2,i} b_i } }  \\
    \vdots   \\
   {\sum\limits_{i = 1}^M {a_{L,i} b_i } }  \\
\end{array}} \right]_{M \times 1} = {\bf{AB}}.
\end{split}
\end{equation}where $a \oplus b \le a + b$ is used in the inequality.

Thus we have (\ref{GF(2) multiply}) and the lemma is proved.

\section{Proof of Lemma 5}
For $N=3$, from (\ref{SER system}), the error probability of the
system in this situation can be calculated as
\begin{equation}
\label{3 user error probability}
\begin{split}
 3P_e  & \le 4\left( {E\left[ {{{x_1} \oplus {{\tilde x}_1}}} \right] + E\left[ { {{x_2} \oplus {{\tilde x}_2}}  } \right] + E\left[ {  {{x_3} \oplus {{\tilde x}_3}}  } \right]}
 \right)\\
  &\quad + \left( {f_{2 ,2 }  + f_{2 ,3 } } \right)E\left[ { {{r_1} \oplus {{\tilde
r}_{1,1}}} } \right]
 + \left( {f_{1 ,2 }  + f_{1 ,3 } } \right)E\left[ {  {{r_2} \oplus {{\tilde
r}_{1,2}}} } \right]
 \\
  &\quad + \left( {f_{2 ,1 }  + f_{2 ,3 } } \right)E\left[ {  {{r_1} \oplus {{\tilde
r}_{2,1}}} } \right]
 + \left( {f_{1 ,1 }  + f_{1 ,3 } } \right)E\left[ { {{r_2} \oplus {{\tilde
r}_{2,2}}}  } \right]
\\
  &\quad + \left( {f_{2 ,1 }  + f_{2 ,2 } } \right)E\left[ { {{r_1} \oplus {{\tilde
r}_{3,1}}}  } \right]
 + \left( {f_{1 ,1 }  + f_{1 ,2 } } \right)E\left[ {  {{r_2} \oplus {{\tilde
r}_{3,2}}} } \right]
 \\
  &= 4\left( {\left( {{x_1} \oplus {{\tilde x}_1}} \right) + \left( {{x_2} \oplus {{\tilde x}_2}} \right) + \left( {{x_3} \oplus {{\tilde x}_3}} \right)}
  \right)\\
  &\quad + \left( {E\left[ {  {{r_2} \oplus {{\tilde
r}_{2,2}}}  } \right]
 + E\left[ { {{r_2} \oplus {{\tilde
r}_{3,2}}}  } \right]
} \right)f_{1 ,1 }  + \left( {E\left[ {  {{r_1} \oplus {{\tilde r}_{2,1}}}  } \right] +E\left[ {  {{r_1} \oplus {{\tilde r}_{3,1}}}  } \right]} \right)f_{2 ,1 }  \\
  &\quad + \left( {E\left[ { {{r_2} \oplus {{\tilde r}_{1,2}}}  } \right] + E\left[ { {{r_2} \oplus {{\tilde r}_{3,2}}}  } \right]} \right)f_{1 ,2 }  + \left( {E\left[ { {{r_1} \oplus {{\tilde r}_{1,1}}}  } \right] + E\left[ {  {{r_1} \oplus {{\tilde r}_{3,1}}}  } \right]} \right)f_{2 ,2 }\\
  &\quad + \left( {E\left[ {  {{r_2} \oplus {{\tilde r}_{1,2}}}  } \right] + E\left[ {  {{r_2} \oplus {{\tilde r}_{2,2}}}  } \right]} \right)f_{1 ,3 }  + \left( {E\left[ {  {{r_1} \oplus {{\tilde r}_{1,1}}}  } \right] + E\left[ {  {{r_1} \oplus {{\tilde r}_{2,1}}}  } \right]} \right)f_{2 ,3 }
  .
\end{split}
\end{equation}

To ensure that each user can obtain other two users' information, the column vector of the encoding
matrix \({\bf{F}}\) can only be \( \left[ {1,1} \right]^T \), \( \left[ {0,1} \right]^T \) and \(
\left[ {1,0} \right]^T \). Using the assumptions of the statistic channel conditions between the users
and the relay and the power that the relay uses to broadcast the detected information, we have
\begin{equation}
\label{3 user order 1}
\begin{split}
E\left[ { {{r_1} \oplus {{\tilde r}_{1,1}}}  } \right]&> E\left[ {  {{r_1} \oplus {{\tilde r}_{2,1}}}
} \right] >
E\left[ {  {{r_1} \oplus {{\tilde r}_{3,1}}} } \right] ,\\
E\left[ {  {{r_2} \oplus {{\tilde r}_{1,2}}}  } \right] &> E\left[ { {{r_2} \oplus {{\tilde r}_{2,2}}}
} \right] >
E\left[ {  {{r_2} \oplus {{\tilde r}_{3,2}}}  } \right],\\
E\left[ {  {{r_2} \oplus {{\tilde r}_{1,2}}}  } \right]&> E\left[ {  {{r_1} \oplus {{\tilde r}_{1,1}}}  } \right],\\
E\left[ {  {{r_2} \oplus {{\tilde r}_{2,2}}}  } \right]&> E\left[ {  {{r_1} \oplus {{\tilde r}_{2,1}}}  } \right],\\
E\left[ {  {{r_2} \oplus {{\tilde r}_{3,2}}}  } \right]&> E\left[ {  {{r_1} \oplus {{\tilde r}_{3,1}}}  } \right]. \\
\end{split}
\end{equation}

From the above relationship, it can be seen that
\begin{equation}
\label{3 user order 2}
\begin{split}
& E\left[ { {{r_1} \oplus {{\tilde r}_{2,1}}}  } \right]+E\left[ {  {{r_1} \oplus {{\tilde r}_{3,1}}}  } \right]\\
&< \min \left({ {E\left[ { {{r_2} \oplus {{\tilde r}_{2,2}}}  } \right] +E\left[ {  {{r_2} \oplus
{{\tilde r}_{3,2}}}  } \right]},{E\left[ {  {{r_1} \oplus {{\tilde r}_{1,1}}}
 } \right] +E\left[ { {{r_1} \oplus {{\tilde r}_{3,1}}}  } \right]} } \right),\\
& \max\left({ {E\left[ { {{r_2} \oplus {{\tilde r}_{2,2}}}  } \right] +E\left[ {  {{r_2} \oplus
{{\tilde r}_{3,2}}} } \right]},{E\left[ {  {{r_1} \oplus {{\tilde r}_{1,1}}}
 } \right] +E\left[ { {{r_1} \oplus {{\tilde r}_{3,1}}}  } \right]} } \right) ,\\
& < \min \left({ {E\left[ {  {{r_2} \oplus {{\tilde r}_{1,2}}}  } \right] +E\left[ {  {{r_2} \oplus
{{\tilde r}_{3,2}}} } \right]},{E\left[ {  {{r_2} \oplus {{\tilde r}_{1,2}}}  } \right] +E\left[ {
{{r_2} \oplus {{\tilde r}_{3,2}}}  } \right]},{E\left[ { {{r_1} \oplus {{\tilde r}_{1,1}}}  } \right]
+E\left[ { {{r_1} \oplus {{\tilde r}_{2,1}}}  } \right]} } \right).\\
\end{split}
\end{equation}

Thus, ${E\left[ { {{r_1} \oplus {{\tilde r}_{2,1}}}  } \right] +E\left[ { {{r_1} \oplus {{\tilde
r}_{3,1}}}  } \right]}$ is the smallest one and to minimize the SEP of the system, in (\ref{3 user
error probability}), the coefficient of ${E\left[ {  {{r_1} \oplus {{\tilde r}_{2,1}}}  } \right]
+E\left[ { {{r_1} \oplus {{\tilde r}_{3,1}}}  } \right]}$ should be largest and then we have \( f_{2
,1 } = 1 \). Moveover, ${E\left[ {  {{r_2} \oplus {{\tilde r}_{2,2}}}  } \right] +E\left[ {  {{r_2}
\oplus {{\tilde r}_{3,2}}} } \right]}$ and ${E\left[ { {{r_1} \oplus {{\tilde r}_{1,1}}}  }
\right]+E\left[ { {{r_1} \oplus {{\tilde r}_{3,1}}}  } \right]} $ are smaller than the rest, which
result in \( f_{1 ,1 } =f_{2 ,2 } =1 \). Thus (\ref{3 user matrix}) is proved.

\bibliographystyle{IEEEtran}
\bibliography{IEEEabrv,myref}

 \newpage

\begin{figure}[!t]
\centering
\includegraphics[width=3in]{}
\caption{System model.} \label{System model}
\end{figure}

\begin{figure}[!t]
\centering
\includegraphics[width=5in]{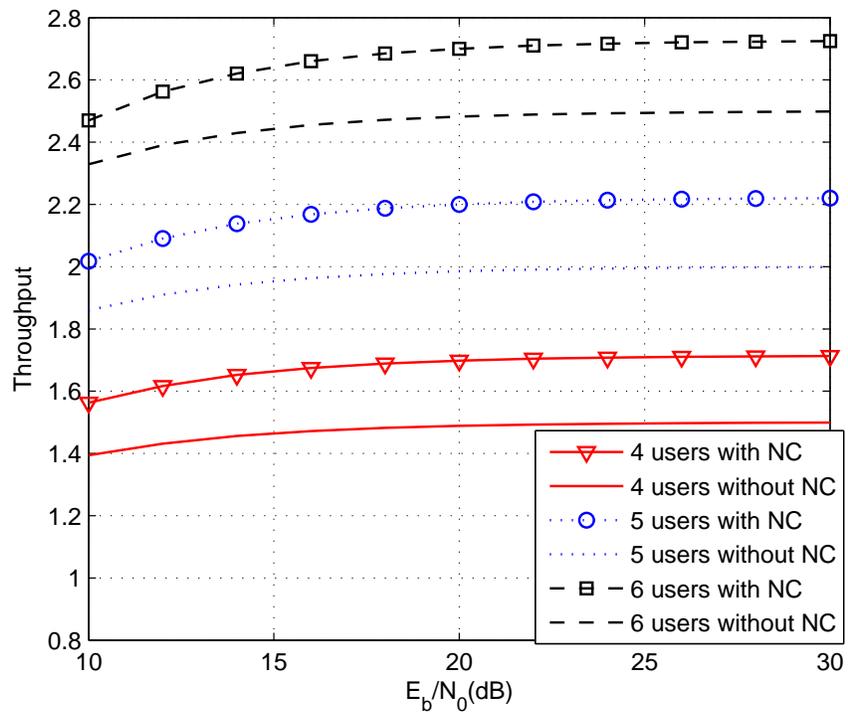}
\caption{Comparisons between the throughput of the system with and without network coding, for $4$,
$5$, and $6$ users.} \label{throughput_456.eps}
\end{figure}

\begin{figure}[!t]
\centering
\includegraphics[width=4.7in]{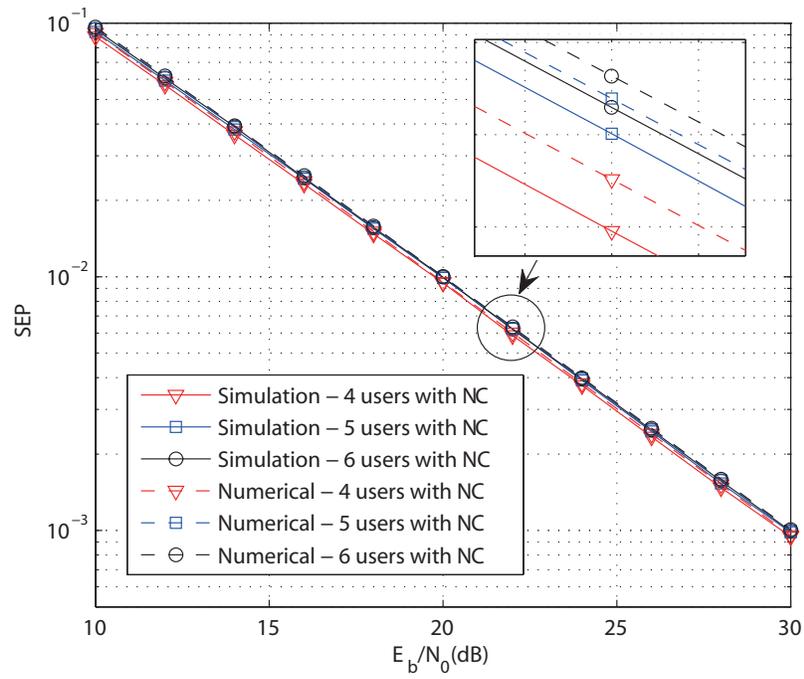}
\caption{The SEP of the system with network coding, for $4$, $5$, and $6$ users.}
\label{pingjun456_1.eps}
\end{figure}

\begin{figure}[!t]
\centering
\includegraphics[width=4.7in]{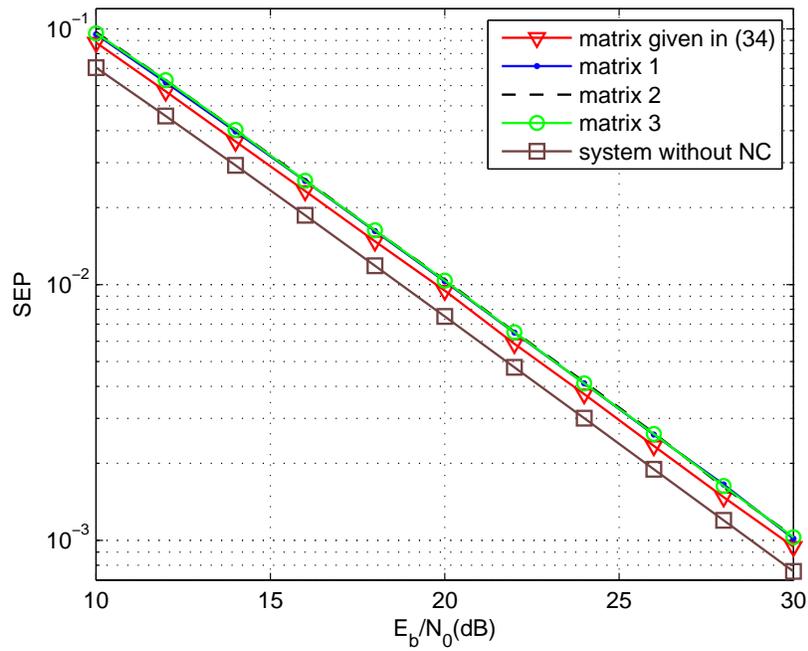}
\caption{The SEP of the system with different network coding matrix, for $4$ users.}
\label{yonghu_4_butongjuzhen.eps}
\end{figure}

\begin{figure}[!t]
\centering
\includegraphics[width=5in]{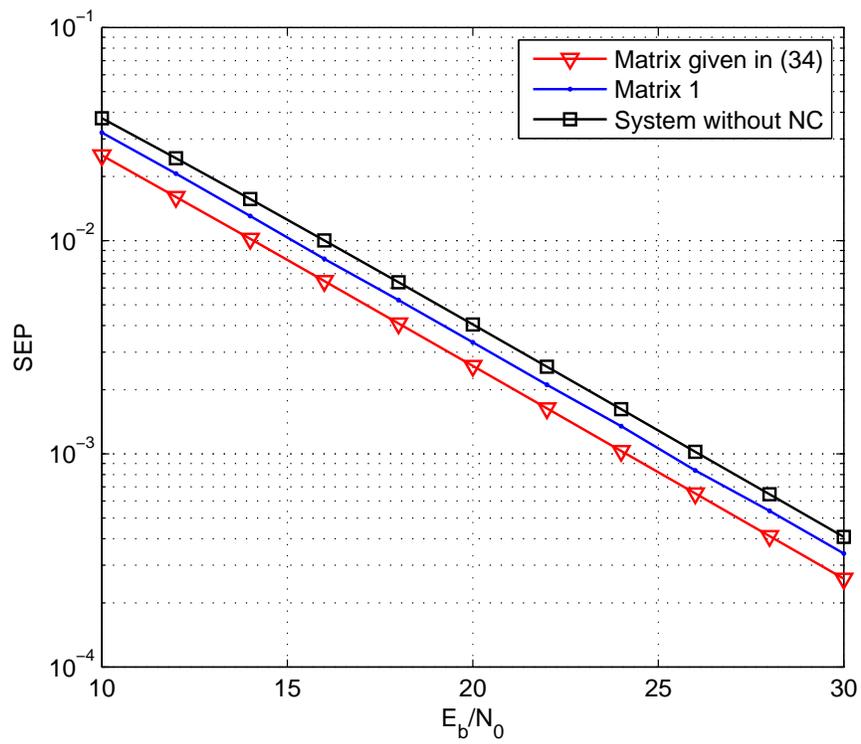}
\caption{The SEP of the system with different network coding matrix when the source to relay link is
$20$dB better than the corresponding relay to source link, for $4$ users.}
\label{yonghu_4_butongjuzhen_diertiao.eps}
\end{figure}

\end{document}